\theoremstyle{plain}
\newtheorem{mythm}{Theorem} \numberwithin{mythm}{section}
\newtheorem{mylemma}[mythm]{Lemma}
\newtheorem{mydef}[mythm]{Definition}
\newtheorem{myrek}[mythm]{Remark}
\newtheorem{mycor}[mythm]{Corollary}
\DeclareMathAlphabet\scr{U}{scr}{m}{n}
\SetMathAlphabet\scr{bold}{U}{scr}{b}{n}
  \DeclareFontFamily{U}{scr}{\skewchar\font'177}%
  \DeclareFontShape{U}{scr}{m}{n}{<-6>rsfs5<6-8>rsfs7<8->rsfs10}{}%
  \DeclareFontShape{U}{scr}{b}{n}{<-6>rsfs5<6-8>rsfs7<8->rsfs10}{}%
\numberwithin{equation}{section}
\DeclareMathOperator{\lip}{LiPr}
\DeclareMathOperator{\sht}{ShTu}
\DeclareMathOperator{\wet}{WeTu}
\DeclareMathOperator{\esr}{ESR}
\begin{document}
\title{\vspace{-1.25cm} Portfolio Selection with Small Transaction Costs\\ and Binding Portfolio Constraints\footnote{The authors thank Christoph Czichowsky, Michael Grill, Paolo Guasoni, and Mete Soner for fruitful discussions. They are also grateful to two anonymous referees for their pertinent remarks.}}
\author{
Ren Liu
\thanks{ETH Z\"urich, Departement Mathematik, R\"amistrasse 101, CH-8092, Z\"urich, Switzerland, email \texttt{ren.liu@math.ethz.ch}}
\and 
Johannes Muhle-Karbe 
\thanks{ETH Z\"urich, Departement Mathematik, R\"amistrasse 101, CH-8092, Z\"urich, Switzerland, and Swiss Finance Institute, Walchestrasse 9, CH-8006, Z\"urich, Switzerland, email \texttt{johannes.muhle-karbe@math.ethz.ch}. Partially supported by the National Centre of Competence in Research ``Financial Valuation and Risk Management'' (NCCR FINRISK), Project D1 (Mathematical Methods in Financial Risk Management), of the Swiss National Science Foundation (SNF).} 
}
\date{}
\pagestyle{plain}
\maketitle

\begin{abstract}
An investor with constant relative risk aversion and an infinite planning horizon trades a risky and a safe asset with constant investment opportunities, in the presence of small transaction costs and a binding exogenous portfolio constraint. We explicitly derive the optimal trading policy, its welfare, and implied trading volume. As an application, we study the problem of selecting a prime broker among alternatives with different lending rates and margin requirements. Moreover, we discuss how changing regulatory constraints affect the deposit rates offered for illiquid loans.
\end{abstract}

\noindent\textbf{Mathematics Subject Classification: (2010)} 91G10, 91G80.

\noindent\textbf{JEL Classification:} G11, G12.

\noindent\textbf{Keywords:} portfolio constraints, transaction costs, long-run, portfolio choice.

\section{Introduction}

\emph{Transaction costs} and \emph{trading constraints} are two central frictions in financial markets, but most of the literature focuses either on transaction costs \cite{magill.constantinidis.76,constantinides.86,davis.norman.90,dumas.luciano.91,shreve.soner.94} or on portfolio constraints \cite{vila.zariphopoulou.90,grossman.vila.92,cvitanic.karatzas.93} -- separately -- with the exception of the paper by Dai, Jin, and Liu \cite{dai.al.11}. They describe the value function of a finite-horizon model with both frictions by means of a double-obstacle problem, obtain some monotonicity properties of the optimal trading boundaries, and conduct an extensive numerical analysis.

Our goal is to obtain more tractable results. To this end, we work with the infinite-horizon model of Dumas and Luciano \cite{dumas.luciano.91}: an investor with constant relative risk aversion trades to maximize the long-term growth rate of her utility, in a market with one risky asset following geometric Brownian motion and one safe asset with constant interest rate. In the presence of a binding exogenous portfolio constraint, i.e., an upper bound on the risky weight, and proportional transaction costs, we derive the optimal trading strategy as well as the associated welfare and trading volume. As in the unconstrained case \cite{gerhold.al.11}, all formulas are explicit in terms of the market and preference parameters as well as an additional \emph{transaction cost gap}, characterized as the root of a scalar equation. As the spread becomes small, all quantities admit explicit asymptotic expansions, in terms of model and preference parameters only. 

These results help to clarify the joint impact of transaction costs and constraints. 

Constantinides \cite{constantinides.86} observed that -- in the absence of constraints -- ``transaction costs have a first order effect on asset demand". On the other hand, he found that their welfare impact is typically less pronounced, since ``a small liquidity premium is sufficient to compensate an investor for deviating significantly from the target portfolio proportions". Starting with Shreve and Soner~\cite{shreve.soner.94}, these numerical results have been made precise in an asymptotic manner: deviations from the optimal frictionless portfolio are of order $\epsilon^{1/3}$ as the spread $\epsilon$ becomes small \cite{janecek.shreve.04}, but the corresponding liquidity premium and utility loss are only of order $\epsilon^{2/3}$ \cite{shreve.soner.94}. These results are robust within the class of diffusion models \cite{soner.touzi.11}, but break down in the presence of binding constraints. Here, we find that the effects on portfolio composition and welfare are of the same order $\epsilon^{1/2}$, in line with the numerical observation of \cite{dai.al.11}  that ``transaction costs can have a first-order effect'' on welfare. 

In the absence of transaction costs, the impact of binding portfolio constraints is easily quantified: harder constraints simply reduce the investor's risky weight. Considering them jointly with trading costs additionally allows to asses their impact on the share turnover the investor's rebalancing generates. The latter turns out to be of order $\epsilon^{-1/2}$, and therefore dominates its frictionless counterpart of order $\epsilon^{-1/3}$ for sufficiently small spreads. The corresponding comparative statics strongly depend on whether the investor's position is leveraged or not. In the absence of leverage, tighter constraints increase trading volume, and constrained turnover dominates its unconstrained counterpart even for large transaction costs. On the contrary, if the optimal policy prescribes a leveraged position in the risky asset, then sufficiently tight constraints reduce turnover, and unconstrained turnover need only be dominated for very small spreads.

Even though all individual asymptotic rates change compared to the unconstrained case, one key observation from the unconstrained case \cite{gerhold.al.11} remains valid: The welfare impact of transaction costs equals implied trading volume (measured consistently) times the spread, times a constant.

To illustrate the implications of our results, we discuss two applications. First, we consider an investor choosing which prime broker to use to buy a leveraged risky position \emph{on margin}. Each broker is prepared to let the investor borrow at a specific rate and up to a given leverage constraint, and our results allow to quantify the attractiveness of each combination. We find that among those brokers that are equally attractive in the absence of transaction costs, the investor typically prefers those with harder constraints (i.e., high margin requirements) and lower lending rates, because transaction costs make highly leveraged portfolios less attractive than they appear to be in frictionless markets. As a second application, we consider a bank that can borrow from its depositors at the safe rate to provide long-term loans, whose book values are assumed to follow geometric Brownian motion. Then, the bank's optimization problem is precisely of the type considered above, and the portfolio constraints correspond to the minimum capital requirements imposed by regulatory authorities. This model can in turn be used to assess the impact of tighter regulatory constraints: Assuming that the bank aims to achieve the same performance, it will have to decrease its deposit rate to compensate for the negative welfare effect of harder constraints. The size of this effect crucially depends on the liquidity of the long-term loans the bank is providing, i.e., on the transaction costs incurred when prematurely liquidating them. Since leveraged positions are less attractive with transaction costs, the decrease in the banks deposit rate is most pronounced in the perfectly liquid case, and is diminished substantially if illiquidity is accounted for. 

On a mathematical level, the proof of our verification theorem is based on applying a variant of the argument of Guasoni and Robertson \cite{guasoni.robertson.12} to a fictitious shadow price trading without transaction costs, which is equivalent to the original market with transaction costs both in terms of the optimal strategy and utility. In contrast to the unconstrained case \cite{gerhold.al.11}, the dynamics of the shadow price involve a singular component, such that the corresponding frictionless market is not arbitrage-free. The constraints,  however, prevent the optimal strategy from exploiting these arbitrage opportunities.

 The remainder of the paper is organized as follows: Section 2 describes the setting and states the main results, whose implications are discussed in Section 3. The main results are derived heuristically in Section 4, and proved rigorously in Section 5.
       

\section{Model and Main Result}
Consider a market with one safe asset $S^0_t = e^{rt}$, $r>0$, and one risky asset, whose \emph{ask (buying) price} $S_t$ follows geometric Brownian motion:
\begin{displaymath}
d S_t/S_t = (\mu + r)d t + \sigma d W_t,\quad S_0 \in (0,\infty).
\end{displaymath}
Here, $\mu >0$ is the expected excess return, $\sigma > 0$ the volatility, and $(W_t)_{t \geq 0}$ is a Brownian motion. The corresponding \emph{bid (selling) price} is $(1-\epsilon)S_t$, where $\epsilon \in (0,1)$ represents the width of the relative bid-ask spread.\footnote{This notation is equivalent to the usual setup with the same constant proportional transaction costs for purchases and sales \cite{davis.norman.90, janecek.shreve.04, shreve.soner.94}. Indeed, set $\check{S}_t=\frac{2-\epsilon}{2}S_t$ and $\check{\epsilon}=\frac{\epsilon}{2-\epsilon}$. Then $((1-\epsilon)S_t,S_t)$ coincides with $((1-\check{\epsilon})\check{S}_t,(1+\check{\epsilon})\check{S}_t)$. Conversely, any bid-ask process $((1-\check{\epsilon})\check{S}_t,(1+\check{\epsilon})\check{S}_t)$ with $\check{\epsilon} \in (0,1)$ equals $((1-\epsilon)S_t,S_t)$ for $S_t=(1+\check{\epsilon})\check{S}_t$ and $\epsilon=\frac{2\check{\epsilon}}{1+\check{\epsilon}}$.}

 A self-financing \emph{trading strategy} is an $\mathbb{R}^2$-valued, predictable process  $(\phi^0,\phi)$ of finite variation: $(\phi_{0-}^0, \phi_{0-})= (\xi^0,\xi)\in \mathbb{R}^2$ denotes the initial positions (in units) in the safe and risky asset, and $(\phi_t^0,\phi_t)$ denotes the positions held at time $t$. Writing $\phi_t=\phi_t^{\uparrow}-\phi_t^{\downarrow}$ as the difference between the cumulative number of shares bought ($\phi_t^{\uparrow}$) and sold ($\phi_t^{\downarrow}$) by time $t$, the \emph{self-financing condition} states that the safe position only changes due to trades in the risky asset:
\begin{equation}\label{eq:sf}
S_t^0 d\phi^0_t = -S_t d\phi_t^{\uparrow}+(1-\epsilon)S_t d\phi_t^{\downarrow}, \quad \forall t \geq 0.
\end{equation}

As is customary, attention is restricted to strategies that remain solvent at all times. In addition, we focus on strategies that satisfy an exogenous \emph{portfolio constraint}, i.e., whose risky weight is uniformly bounded from above.

 \begin{mydef}
A self-financing trading strategy $(\phi^0,\phi)$ is called \emph{admissible}, if its liquidation value is positive at all times,
$$\Xi_t^{\phi} := \phi_t^0S_t^0+(1-\epsilon)S_t\phi_t^{+}-S_t\phi_t^{-} \geq 0, \quad  \forall t \geq 0,$$
and it satisfies the \emph{portfolio constraint}
\begin{equation}\label{leverage constraint}
\pi_t:=\frac{\phi_t S_t}{\phi^0_t S^0_t+\phi_t S_t} \leq \pi_{\max}, \quad  \forall t \geq 0.
\end{equation}
\end{mydef}

It will sometimes be notationally convenient to decompose the constraint $\pi_{\max}$ in a multiplicative manner as
$$\pi_{\max}=\kappa \pi_*,$$
where 
$$\pi_*=\mu/\gamma\sigma^2$$
is the frictionless optimal Merton proportion and $\kappa$ is the \emph{relative constraint}. Throughout, the constraints are assumed to be binding, i.e., $\pi_{\max}<\pi_*$ resp.\ $\kappa<1$. Otherwise, they have no effect as in the frictionless case if the transaction costs $\epsilon$ are sufficiently small.

As in Dumas and Luciano~\cite{dumas.91} the investor has constant relative risk aversion $0<\gamma \not= 1$ and an infinite planning horizon: 
\begin{mydef}\label{longrun}
An admissible strategy $(\varphi^0,\varphi)$ is called \emph{long-run optimal}, if it maximizes the equi\-valent safe rate
\begin{equation}\label{defdef}
\liminf_{T\rightarrow \infty}\frac{1}{T}\log{\mathbb{E}\left[(\Xi_T^{\varphi})^{1-\gamma}\right]^{\frac{1}{1-\gamma}}}
\end{equation}
over all admissible strategies, where $0<\gamma\not=1$ denotes the investor's relative risk aversion.
\end{mydef}

Our main results can be summarized as follows:

\begin{mythm}\label{main result}
An investor with constant relative risk aversion $0 < \gamma \not=1$ trades to maximize the equivalent safe rate in the presence of a binding portfolio constraint $0 < \pi_{\max}=\kappa \pi_* \not=1$.\footnote{In the degenerate case $\pi_{\max} = \kappa \pi_* =1$, an optimal strategy is to fully invest into the risky asset initially and never trade again afterwards. Both share and wealth turnover vanish in this case, and the equivalent safe rate and liquidity premium coincide with their counterparts in the absence of transaction costs.} 

Then, for small transaction costs $\epsilon > 0$:
\begin{enumerate}
\item[i)] \emph{(Equivalent Safe Rate)}\\
For the investor, trading the risky asset with transaction costs and portfolio constraints is equivalent to leaving all wealth in a hypothetical safe asset, which pays the higher \emph{equivalent safe rate}
\begin{displaymath}
\esr = r + \frac{\mu^2}{2\gamma\sigma^2}(2\kappa/(1-\lambda)-\kappa^2)(1-\lambda)^2,
\end{displaymath}
where the \emph{gap} $\lambda$ is defined in Item~iv) below.
\item[ii)] \emph{(Liquidity Premium)}\\
Trading the risky asset with transaction costs and constraints is equivalent to trading a hypothetical asset with no transaction cost and no constraint, with the same volatility $\sigma$, but with lower expected excess return $\mu\sqrt{2\kappa/(1-\lambda)-\kappa^2}(1-\lambda)$. Thus, the \emph{liquidity premium} is
\begin{displaymath}
\lip = \mu - \mu\sqrt{2\kappa/(1-\lambda)-\kappa^2}(1-\lambda).
\end{displaymath}
\item[iii)]\label{opt} \emph{(Trading Policy)}\\
It is optimal to keep the risky weight (in terms of the ask price) between the buying and selling boundaries
\begin{displaymath}
\pi_- = (1-\lambda)\pi_{\max}, \quad \pi_+ = \pi_{\max}.
\end{displaymath}
\item[iv)]\label{Gap} \emph{(Gap)}\\
$\lambda$ is the unique value for which the solution of the initial value problem
\begin{eqnarray}
0 &=& w'(x) + (1-\gamma)w(x)^2 + (\tfrac{2\mu}{\sigma^2}-1)w(x) -\tfrac{\mu^2}{\gamma\sigma^4}(1-(1-\kappa(1-\lambda))^2),\nonumber\\
w(0) &=& (1-\lambda)\pi_{\max},\nonumber
\end{eqnarray}
also satisfies the terminal value condition
\begin{displaymath}
w\left(\log{\left(u/l(\lambda)\right)}\right)= \tfrac{\pi_{\max}(1-\epsilon)}{(1-\pi_{\max})+\pi_{\max}(1-\epsilon)}=:w_{+},
\end{displaymath}
where
\begin{displaymath}
u/l(\lambda)= \tfrac{\pi_{\max}/(1-\pi_{\max})}{(1-\lambda)\pi_{\max}/(1-(1-\lambda)\pi_{\max})}.
\end{displaymath}
\item[v)] \emph{(Share Turnover)}\\
Share Turnover, defined as shares traded $d\|\varphi\|_t = d\varphi_t^{\uparrow}+d\varphi_t^{\downarrow}$ divided by shares held $|\varphi_t|$, has the long-term average:
\begin{displaymath}
\sht  := \lim_{T \rightarrow \infty}\frac{1}{T}\int_0^T\frac{d\|\varphi\|_t}{|\varphi_t|}
 =  \begin{cases} \frac{\sigma^2}{2}\left(\frac{2\mu}{\sigma^2}-1\right)\left(\frac{1-\pi_-}{(u/l(\lambda))^{\frac{2\mu}{\sigma^2}-1}-1}-\frac{1-w_{+}}{(u/l(\lambda))^{1-\frac{2\mu}{\sigma^2}}-1}\right), & \text{if } \mu \not= \frac{\sigma^2}{2},\\
 \frac{\sigma^2}{2\log{(u/l(\lambda))}}\left(1-\pi_- + 1-w_{+}\right), & \text{if } \mu = \frac{\sigma^2}{2}.
\end{cases}
\end{displaymath}
\item[vi)] \emph{(Wealth Turnover)}\\
Wealth Turnover, defined as wealth traded divided by wealth held, has the long-term average:
\begin{align*}
\wet & :=  \lim_{T\rightarrow \infty}\frac{1}{T}\left(\int_0^T\frac{S_t d\varphi_t^\uparrow}{\varphi_t^0S_t^0+\varphi_t S_t} + \int_0^T \frac{(1-\epsilon)S_t d\varphi_t^\downarrow}{\varphi_t^0S_t^0+\varphi_t(1-\epsilon)S_t}\right)\nonumber \\  &=
\begin{cases}
\frac{\sigma^2}{2}\left(\frac{2\mu}{\sigma^2}-1\right)\left(\frac{\pi_-(1-\pi_-)}{(u/l(\lambda))^{\frac{2\mu}{\sigma^2}-1}-1}-\frac{w_+(1-w_+)}{(u/l(\lambda))^{1-\frac{2\mu}{\sigma^2}}-1}\right), & \text{if } \mu \not= \frac{\sigma^2}{2},\\
\frac{\sigma^2}{2\log{(u/l(\lambda))}}\left(\pi_-(1-\pi_-)+ w_+(1-w_+)\right), & \text{if } \mu = \frac{\sigma^2}{2}.
\end{cases}\nonumber
\end{align*}
\item[vii)] \emph{(Asymptotics)}\\
The following asymptotic expansions hold true as $\epsilon \downarrow 0$:\footnote{Notice that $\kappa$ and $\pi_{\max}/\pi_{*}$ are used interchangeably to ease the computations for the comparative statics.}
\begin{align*}
\esr &= 
r + \frac{\mu^2}{2\gamma\sigma^2}\left(\frac{2\pi_{\max}}{\pi_*}- \left(\frac{\pi_{\max}}{\pi_*}\right)^2\right)- \gamma\sigma^2\left(\frac{1}{\gamma}(\pi_*-\pi_{\max})(1-\pi_{\max})^2\pi_{\max}^2\right)^{1/2}\epsilon^{1/2}+ \mathcal{O}(\epsilon), \nonumber\\
\lip &= 
\mu\left(1-\sqrt{\frac{2\pi_{\max}}{\pi_*}-\left(\frac{\pi_{\max}}{\pi_*}\right)^2}\right)+  \gamma\sigma^2\left(\frac{1}{\gamma}\frac{\pi_{\max}(\pi_*-\pi_{\max})(1-\pi_{\max})^2}{2\pi_{*}-\pi_{\max}}\right)^{1/2}\epsilon^{1/2}+ \mathcal{O}(\epsilon), \nonumber\\
\pi_- &= 
\pi_{\max}- \left(\frac{1}{\gamma}\frac{\pi_{\max}^2}{\pi_*-\pi_{\max}}(1-\pi_{\max})^2\right)^{1/2}\epsilon^{1/2}+ \mathcal{O}(\epsilon),\nonumber\\
\lambda &= \left(\frac{1}{\gamma}\frac{(1-\pi_{\max})^2}{\pi_*-\pi_{\max}}\right)^{1/2}\epsilon^{1/2}+ \mathcal{O}(\epsilon),\nonumber\\
\sht &= \gamma\sigma^2\left(\frac{1}{\gamma}(\pi_*-\pi_{\max})(1-\pi_{\max})^2\right)^{1/2}\epsilon^{-1/2}+ \mathcal{O}(1),\nonumber\\
\wet &=  \gamma\sigma^2\left(\frac{1}{\gamma}(\pi_*-\pi_{\max})(1-\pi_{\max})^2\pi_{\max}^2\right)^{1/2}\epsilon^{-1/2}+ \mathcal{O}(1).\nonumber
\end{align*}
\end{enumerate}
\end{mythm}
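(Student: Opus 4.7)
The plan is to adapt the shadow-price verification method of Guasoni--Robertson to the constrained setting. For the candidate buy/sell boundaries $\pi_-,\pi_+$ of item (iii), I would construct a fictitious frictionless asset $\tilde S_t$ taking values in $[(1-\epsilon)S_t,S_t]$, coinciding with the ask price $S_t$ whenever $\pi=\pi_+$ and with the bid price $(1-\epsilon)S_t$ whenever $\pi=\pi_-$, interpolating smoothly in the interior, such that the candidate policy is long-run optimal in the frictionless market driven by $\tilde S$ subject to the same constraint $\pi\leq\pi_{\max}$. Because liquidation values along the candidate coincide in both markets, and any competing admissible strategy has smaller liquidation value in the original market than its counterpart in the shadow market, long-run optimality will transfer directly to the original problem, and the equivalent safe rate of item (i) together with the liquidity-premium representation of item (ii) will follow by reading off the frictionless equivalent safe rate of $\tilde S$.

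To derive the free-boundary system, I would parametrize the no-trade region by the log-ratio state $Y_t=\log(\varphi_tS_t/(\varphi^0_tS^0_t))$, which between trades is a drifted Brownian motion kept inside $[\log l,\log u]$ by instantaneous reflection at the two boundaries. The homogeneity of CRRA utility reduces the HJB system to a Riccati ODE for the risky weight function $w$, namely the one in item (iv); the inhomogeneous term $-\tfrac{\mu^2}{\gamma\sigma^4}(1-(1-\kappa(1-\lambda))^2)$ encodes the shadow Sharpe ratio reduced by the Lagrange multiplier arising from the binding constraint. The initial condition $w(0)=(1-\lambda)\pi_{\max}$ expresses smooth pasting at the buying boundary, while $w(\log(u/l(\lambda)))=w_+$ translates the identity $\pi_+=\pi_{\max}$ into bid coordinates. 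The gap $\lambda$ is then pinned down as the unique shooting parameter for which both boundary conditions are met, and existence/uniqueness is argued by monotonicity in $\lambda$ of the terminal value of the Riccati flow.

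Verification itself follows the Guasoni--Robertson template: the dual state-price candidate $Z_t\propto(\Xi^{\varphi}_t)^{-\gamma}e^{-\gamma\,\mathrm{ESR}\,t}$ is an exponential supermartingale under $\tilde S$, whose drift-vanishing condition along the candidate policy is exactly the ODE of item (iv); a Fatou/Jensen argument on finite horizons then produces a matching upper bound for the long-run growth rate, while the lower bound comes from direct evaluation along the candidate. The genuinely new feature, relative to \cite{gerhold.al.11}, is that $\tilde S$ must carry a singular drift component at the upper boundary, so the shadow market admits arbitrage; the constraint $\pi\leq\pi_{\max}$ is precisely what prevents any admissible competitor from exploiting it, so the dual inequality still closes. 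I expect this step, namely making the singular shadow-price construction precise and verifying that the corresponding supermartingale inequality holds against every constrained admissible strategy, to be the main technical obstacle.

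Items (v) and (vi) then follow because, along the candidate policy, $Y_t$ is a reflected drifted Brownian motion on $[\log l,\log u]$ whose invariant density is the scale density of geometric Brownian motion on a bounded interval and whose boundary local times admit closed-form mean rates; the stated turnover formulas arise by integrating shares resp.\ wealth traded against the invariant measure and applying the ergodic theorem, splitting into the cases $\mu\neq\sigma^2/2$ and $\mu=\sigma^2/2$ according to whether the scale function is a power or a logarithm. For the asymptotics of item (vii), inserting the power ansatz $\lambda=c\,\epsilon^{1/2}+\mathcal{O}(\epsilon)$ into the ODE and boundary conditions of (iv) shrinks the no-trade interval to width $\mathcal{O}(\epsilon^{1/2})$ and, after matching leading orders, fixes $c=((1-\pi_{\max})^2/(\gamma(\pi_*-\pi_{\max})))^{1/2}$; substituting this expansion into the explicit formulas in (i)--(vi) and Taylor-expanding yields the remaining expansions, with the $\epsilon^{1/2}$ scaling reflecting the first-order nature of constraint-induced frictions, in contrast to the classical $\epsilon^{1/3}$/$\epsilon^{2/3}$ scalings of the unconstrained problem.
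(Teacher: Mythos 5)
Your proposal follows essentially the same route as the paper: a shadow price built from the Riccati solution $w$ of the shooting problem in item (iv), evaluated along a drifted Brownian motion reflected on the no-trade interval, Guasoni--Robertson-type duality with a constraint-adjusted (local-time-corrected) discount factor that neutralizes the arbitrage created by the singular reflection at the upper boundary, ergodic averages of the reflected diffusion and its boundary local times for the turnover formulas, and a fractional power-series expansion of the gap $\lambda$ for the asymptotics. One small slip: the shadow price must coincide with the ask price at the buying boundary $\pi_-$ and with the bid price at the selling boundary $\pi_+=\pi_{\max}$ (you state the reverse), which is in fact consistent with your later, correct placement of the singular component at the upper boundary where the constraint prevents competitors from exploiting the reflection.
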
 

As in the absence of constraints, the stationary policy from Theorem \ref{main result} is also approximately optimal -- at the leading order $\epsilon^{1/2}$ for small transaction costs $\epsilon$ -- for any \emph{finite} time horizon:

\begin{mythm}\label{finitehorizone}
Fix a time horizon $T>0$. Then, the finite-horizon equivalent safe rate of any admissible strategy $(\psi^0,\psi)$ satisfies the upper bound
\begin{eqnarray}
\frac{1}{T}\log{\mathbb{E}[(\Xi_T^{\psi})^{1-\gamma}]}^{\frac{1}{1-\gamma}} &\leq& 
\esr +\frac{1}{T} \log{(\xi_{0^-}^0+\xi_{0^-}S_0)}+\mathcal{O}\left(\epsilon\right).\nonumber
\end{eqnarray}
The finite-horizon equivalent safe rate of the long-run optimal strategy $(\varphi^0,\varphi)$ from Theorem \ref{main result} satisfies the lower bound
\begin{eqnarray}
\frac{1}{T}\log{\mathbb{E}[(\Xi_T^{\varphi})^{1-\gamma}]}^{\frac{1}{1-\gamma}} \geq
\esr + \frac{1}{T} \log{(\xi_{0^-}^0+\xi_{0^-}S_0)} + \mathcal{O}\left(\epsilon\right).\nonumber
\end{eqnarray}
\end{mythm}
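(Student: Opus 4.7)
The plan is to transfer the shadow-price machinery already developed in the proof of Theorem~\ref{main result} (Section~5) to the finite-horizon setting, replacing the ergodic argument by an explicit supermartingale/martingale identity evaluated at $T$. The main ingredient is the shadow price $\tilde{S}$ taking values in $[(1-\epsilon)S, S]$, together with a candidate value function of the form $v(x,y) = \tfrac{(x+y)^{1-\gamma}}{1-\gamma} g(\log(y/x))$, where $g$ is built from the ODE solution $w$ of Item~iv) so that, in the frictionless market with price $\tilde{S}$ subject to the same constraint $\pi \leq \pi_{\max}$, the function $v$ satisfies the HJB equation with time derivative $(1-\gamma)\esr \cdot v$. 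Consequently, the process $e^{-(1-\gamma)\esr \cdot t}\, v(\phi^0_t S^0_t, \phi_t \tilde{S}_t)$ is a supermartingale for every admissible $(\phi^0,\phi)$, and is a true martingale (up to $\mathcal{O}(\epsilon)$ boundary terms) for the proposed optimum $(\varphi^0,\varphi)$.

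For the upper bound, take an arbitrary admissible $(\psi^0,\psi)$. Since $\tilde{S} \in [(1-\epsilon)S, S]$, the self-financing condition~\eqref{eq:sf} implies $\Xi_T^\psi \leq \psi_T^0 S_T^0 + \psi_T \tilde{S}_T$, so the frictionless shadow-market wealth dominates the liquidation value in the original market. The supermartingale property then gives $\mathbb{E}[(\Xi_T^\psi)^{1-\gamma}] \leq (1-\gamma)\, v(\xi^0_{0-}, \xi_{0-} S_0)\, e^{(1-\gamma)\esr \cdot T}$. Since $g$ differs from $1$ by $\mathcal{O}(\epsilon^{1/2})$ on the trading region and since $v$ evaluated at the initial endowment equals $\tfrac{(\xi^0_{0-}+\xi_{0-}S_0)^{1-\gamma}}{1-\gamma}(1+\mathcal{O}(\epsilon))$, taking $\frac{1}{(1-\gamma)T}\log$ (and using $\log(1+\mathcal{O}(\epsilon)) = \mathcal{O}(\epsilon)$) yields the stated upper bound. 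For the lower bound, the identical computation applied to the long-run optimal $(\varphi^0,\varphi)$ becomes an equality at the $\mathcal{O}(\epsilon)$ level, since (a) the boundary conditions on $w$ in Item~iv) are precisely the smooth-fit conditions that make the Itô drift vanish where $\varphi$ has singular variation, and (b) the difference between liquidating at the bid price $(1-\epsilon)S_T$ and evaluating at the shadow price $\tilde{S}_T$ is an $\mathcal{O}(\epsilon)$ multiplicative correction.

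The main obstacle will be making rigorous the equality (not just inequality) at the correct precision for the optimal strategy, given the singular boundary behavior of $\varphi$ at $\pi_\pm$. Concretely, one has to verify that the boundary values $w(0) = (1-\lambda)\pi_{\max}$ and $w(\log(u/l(\lambda))) = w_+$ from Item~iv) exactly cancel the boundary integrals produced by $d\varphi_t^\uparrow$ and $d\varphi_t^\downarrow$ in Itô's formula, and that starting from an arbitrary $(\xi_{0-}^0, \xi_{0-})$ (possibly outside the no-trade region) induces only a single initial rebalancing whose cost is captured by the $\mathcal{O}(\epsilon)$ error. These are the same computations underlying the leading-order expansions in Item~vii), but used pointwise in $T$ rather than in the ergodic limit $T \to \infty$; the uniform-in-$T$ character of the $\mathcal{O}(\epsilon)$ remainder follows because $g$ is bounded on the no-trade region and the martingale identity removes the $T$-dependence entirely.
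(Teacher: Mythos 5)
Your plan is a primal verification in the shadow market, whereas the paper argues by duality: it dominates $\Xi_T^\psi$ by the shadow payoff, applies the power-utility duality bound with the explicit deflator $\tilde M$ of Lemma~\ref{discount}, and then evaluates both sides exactly via the finite-horizon identities \eqref{bound1}--\eqref{bound2} under the myopic measure, so that the only error terms are $\log(1-\pi_{\max}\epsilon/(1-\epsilon))$, the bid-ask shift of the initial wealth, and $\tilde q(Y_T)-\tilde q(Y_0)$. The decisive point that your proposal never addresses is the singular component in the dynamics of $\tilde S$: at the selling boundary $\tilde S$ is pushed upward by the local time term $\bigl(1-\tfrac{w'}{(1-w)w}(\log(u/l))\bigr)\,dU_t$, so the shadow market is \emph{not} arbitrage-free, and the supermartingale property you claim for $e^{-(1-\gamma)\esr\, t}\,v(\phi^0_tS^0_t,\phi_t\tilde S_t)$ cannot follow from the interior HJB equation alone. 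It holds only for \emph{constrained} strategies, and only because the binding constraint caps the position exactly where the reflection acts, combined with the inequality $w'(\log(u/l))\le w_+(1-w_+)$ of Corollary~\ref{wprimecor}; in the paper this is encoded in the singular decay factor $e^{-\tilde\pi_{\max}(1-\frac{w'}{(1-w)w}(\log(u/l)))U_t}$ inside $\tilde M$. Your appeal to ``smooth-fit conditions that make the It\^o drift vanish where $\varphi$ has singular variation'' is incorrect at the upper boundary: the paper explicitly notes there is no smooth pasting there (the boundary is fixed by the constraint), and what one has is an inequality, not a cancellation. Moreover, for an \emph{arbitrary} admissible competitor the ratio $\log(\phi_t\tilde S_t/\phi_t^0S^0_t)$ is not confined to the no-trade region, so a value function of the form $\frac{(x+y)^{1-\gamma}}{1-\gamma}g(\log(y/x))$ built from $w$ on $[0,\log(u/l)]$ would need a global extension satisfying the appropriate variational inequalities -- precisely the work the duality route avoids by writing everything as a function of the reflected state $Y$.

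There is also a quantitative gap in your error accounting. Asserting that $g$ differs from $1$ by $\mathcal{O}(\epsilon^{1/2})$ only yields an $\mathcal{O}(\epsilon^{1/2})$ remainder in the equivalent safe rate, which is too coarse for the statement. The $\mathcal{O}(\epsilon)$ in Theorem~\ref{finitehorizone} comes from the finer estimate in the paper's proof: the correction is $\tilde q(Y_T)-\tilde q(Y_0)$ with $\tilde q(y)=\int_0^y\bigl(\tfrac{w'(z)}{1-w(z)}-w(z)\bigr)dz$, whose integrand is $\mathcal{O}(\epsilon^{1/2})$ while the interval $[0,\log(u/l)]$ has length $\mathcal{O}(\epsilon^{1/2})$, so the product is $\mathcal{O}(\epsilon)$. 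Without this two-factor argument (and without the boundary inequality above), your proof as written does not establish the claimed bounds.
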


As pointed out by an anonymous referee, the finite horizon bounds in Theorem \ref{finitehorizone} also show that the policy from Theorem \ref{main result} maximizes \eqref{defdef} if the limes inferior is replaced by a limes superior.

\section{Implications}

Let us now discuss some of the implications of our main result.

\subsection{Asset Demand and Welfare}\label{order12}
In the absence of constraints, the impact of transaction costs on the optimal policy is large ($\sim \epsilon^{1/3}$, cf.\ \cite{janecek.shreve.04}) whereas their effect on welfare is small ($\sim \epsilon^{2/3}$, cf.\ \cite{shreve.soner.94}), both in accordance with the numerical observations of Constantinides \cite{constantinides.86}. With binding constraints, this no longer holds true: Both effects turn out to be proportional to the square-root $\epsilon^{1/2}$ of the spread, making precise the observation of \cite{dai.al.11} that transaction costs can have a first-order effect on welfare. The reason is that investors, if unconstrained, would like to hold larger risky positions then they are allowed to and therefore are more reluctant to tolerate downward swings of their risky position. This leads to a smaller no-trade region and hence, due to larger trading costs, to a bigger welfare impact.

More formally, this can also be rationalized by adapting the argument of Rogers~\cite{rogers.01}. He argued that the losses per unit time due to transaction costs are proportional to the local time of a reflected Brownian motion and hence of order $\epsilon/x$, where $x$ denotes the length of the no-trade region. This reasoning only depends on the width of the no-trade region and hence remains valid also in the presence of constraints. What changes is the displacement loss due to deviating from the frictionless optimal policy. In the absence of constraints, the optimal frictionless Merton proportion $\pi_{*}$ lies in the no-trade region $[\pi_-,\pi_+]$.\footnote{Note that this need not hold for leveraged positions if transaction costs are sufficiently large \cite[p.\ 675]{shreve.soner.94}, but is always true if the transaction costs are small enough (compare \cite[p.\ 182]{janecek.shreve.04}).} Consequently, the value function behaves like $(\pi-\pi_{*})^2$ near the maximum $\pi_{*}$ according to Taylor's theorem, leading to losses of order $x^2$ per unit time due to suboptimal portfolio composition. Minimizing the total loss then leads to $x\sim \epsilon^{1/3}$ and a welfare loss $\sim\epsilon^{2/3}$. In the presence of binding constraints, the frictionless minimizer no longer lies in the no-trade region; consequently, the value function behaves linearly leading to losses of order $x$. Again minimizing the total loss gives $x\sim\epsilon^{1/2}$ and a welfare loss of the same order.

\begin{figure}
\subfigure{\includegraphics[width=0.49\textwidth]{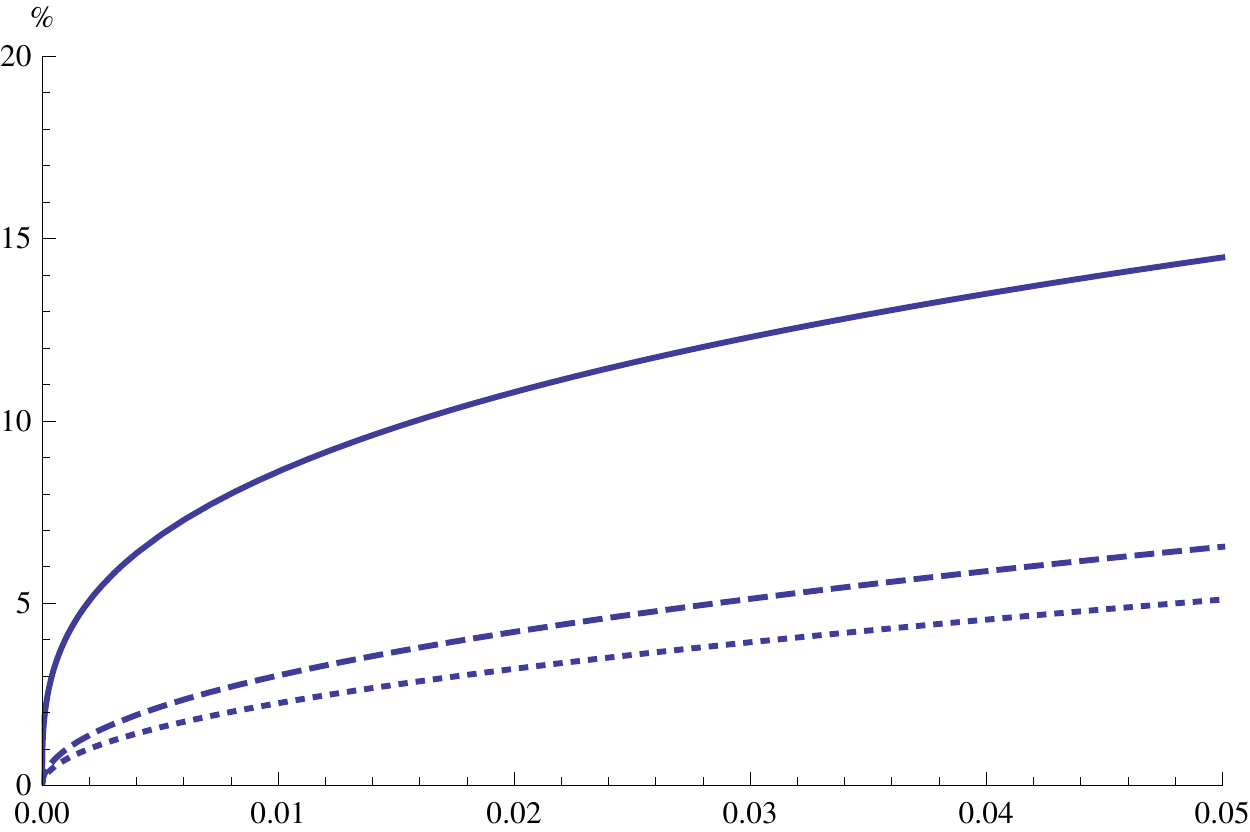}}\hfill
\subfigure{\includegraphics[width=0.49\textwidth]{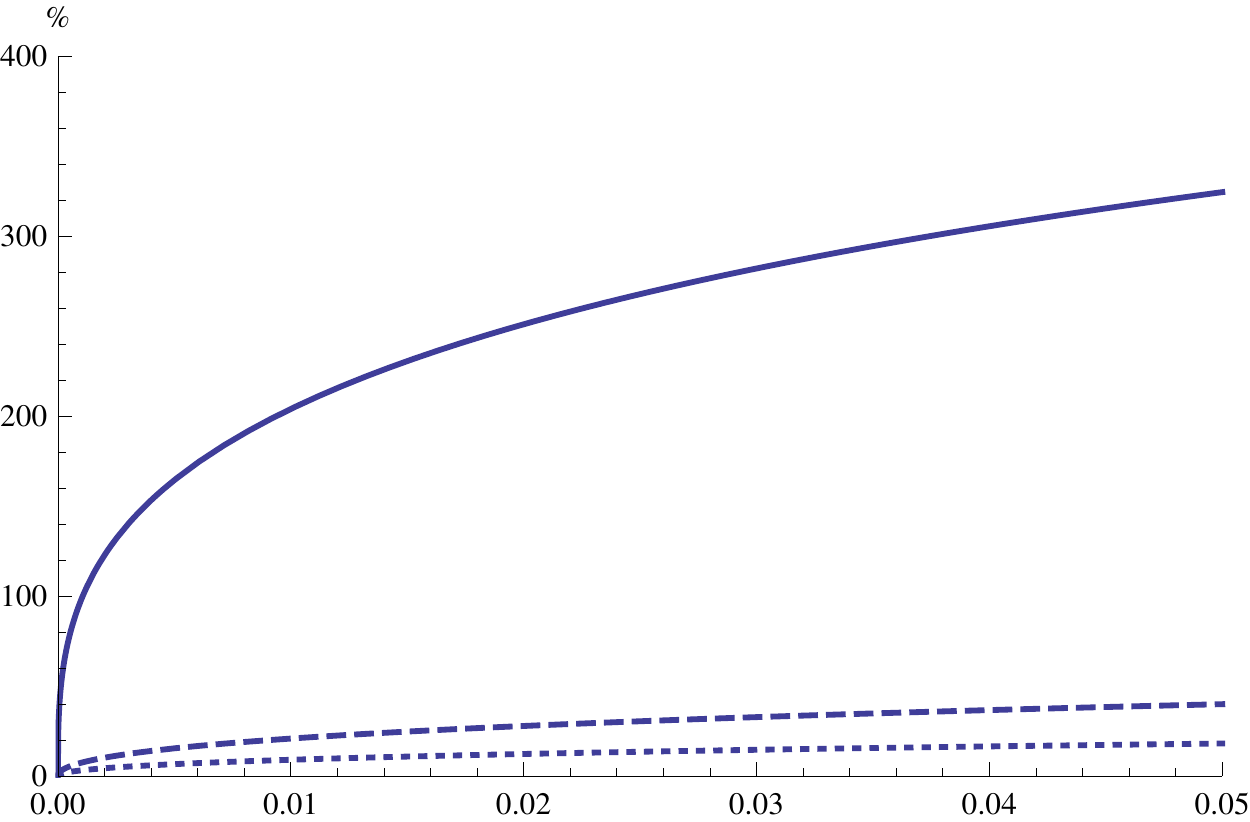}}
\caption{Left panel: Length of the no-trade region (vertical axis) plotted against the bid-ask spread $\epsilon$ (horizontal axis) for an unconstrained weight $\pi_*=62.5\%$ without leverage (solid), and with constraints $\pi_{\max}$ = 50\% (dashed) and 40\% (dotted). Right panel: Length of the no-trade region against the spread for an unconstrained weight of $\pi_*=390\%$ with leverage (solid), and with constraints $\pi_{\max}=225\%$ (dashed) and $175\%$ (dotted).  Model parameters are $\mu = 8\%$, $\sigma = 16\%$ and risk aversion is $\gamma=5$ resp.\ $\gamma=0.8$.  }
\label{fig:nt}
\end{figure}

For small transaction costs, it is possible to determine the comparative statics of the width of the no-trade region for varying constraint levels by analyzing the leading term in the asymptotic expansion. The above argument that investors constrained from above only tolerate smaller downward moves of their risky weight suggests that harder constraints should lead to a smaller no-trade region. This intuitive reasoning is indeed true almost generically, i.e., unless the unconstrained frictionless risky weight lies too close to unity.\footnote{In the degenerate case $\pi_*=1$, it is optimal to invest in a full risky position initially, and then hold the latter without further trades. Hence, the unconstrained no-trade region vanishes, but its width is increased by any non-trivial constraints.  For unconstrained frictionless weights close enough to one ($\pi_* \in [0.93,1.25]$) imposing constraints can similarly still increase the width of the no-trade region.} In all other cases the intuition that harder constraints lead to a smaller no-trade region indeed holds true at the leading order. Using the exact solution for $\lambda$ Figure~\ref{fig:nt} depicts the length of the no-trade region against the bid-ask spread and confirms the asymptotic result.

\subsection{Trading Volume}

In the frictionless case, the impact of binding constraints is simple: They reduce the investor's risky position. The present setting with transaction costs also allows to quantify their effect on the implied trading volume, which shows a more involved picture. Indeed, two competing effects are at work here: On the one hand, the constraints reduce the investor's risky position, thereby also reducing the amount of rebalancing necessary to keep the risky weight in the no-trade region. On the other hand, additional trading is required as the constraints (typically) lead to a smaller no-trade region. Our asymptotic results show that as the spread becomes small the latter effect prevails: The smaller no-trade region leads to share and wealth turnovers of order $\epsilon^{-1/2}$, in contrast to the rate $\epsilon^{-1/3}$ observed in the unconstrained case \cite{gerhold.al.11}. That is, for sufficiently small spreads, investors trade more in the presence of binding constraints. 

The comparative statics of share turnover for small costs can again be analyzed by means of the leading term in the asymptotic expansion. In the no-leverage case $\pi_* \in (0,1)$, harder constraints not only (typically) decrease the width of the no-trade region but also move the risky weight away from the degenerate buy-and-hold strategy obtained for $\pi_*=1$. Consequently, at the leading order $\epsilon^{-1/2}$, harder constraints always imply increased share turnover in this case. In contrast, the situation is ambiguous in the leverage case. On the one hand, harder constraints decrease the width of the no-trade region, thereby increasing turnover. On the other hand, however, they decrease turnover by pulling the risky weight closer to the buy-and-hold level. For sufficiently hard constraints ($\pi_{\max} \in [1,\frac{1+2\pi_{*}}{3})$) the latter effect prevails, decreasing turnover, and vice versa for $\pi_{\max}>\frac{1+2\pi_{*}}{3}$.

Since all of these results only hold true asymptotically as the spread $\epsilon$ becomes small, it is interesting to compare them with their exact counterparts, obtained by numerically solving for the gap $\lambda$. The results are reported in Figure \ref{fig:shtu}. In the absence of leverage for the frictionless weight $\pi_*$, there is perfect agreement with the asymptotic results: Turnover is increasing with harder constraints, and larger than in the unconstrained case. In the leverage case, however, the situation is less clear-cut. The comparative statics again match the asymptotic results: Both constraints lie in the domain where harder constraints decrease turnover asymptotically, matching the numerical results. Compared to the unconstrained case, however, we observe that very low levels of transaction costs may be needed for the constrained turnover to surpass its unconstrained counterpart, in particular, for tight constraints.

\begin{figure}
\subfigure{\includegraphics[width=0.49\textwidth]{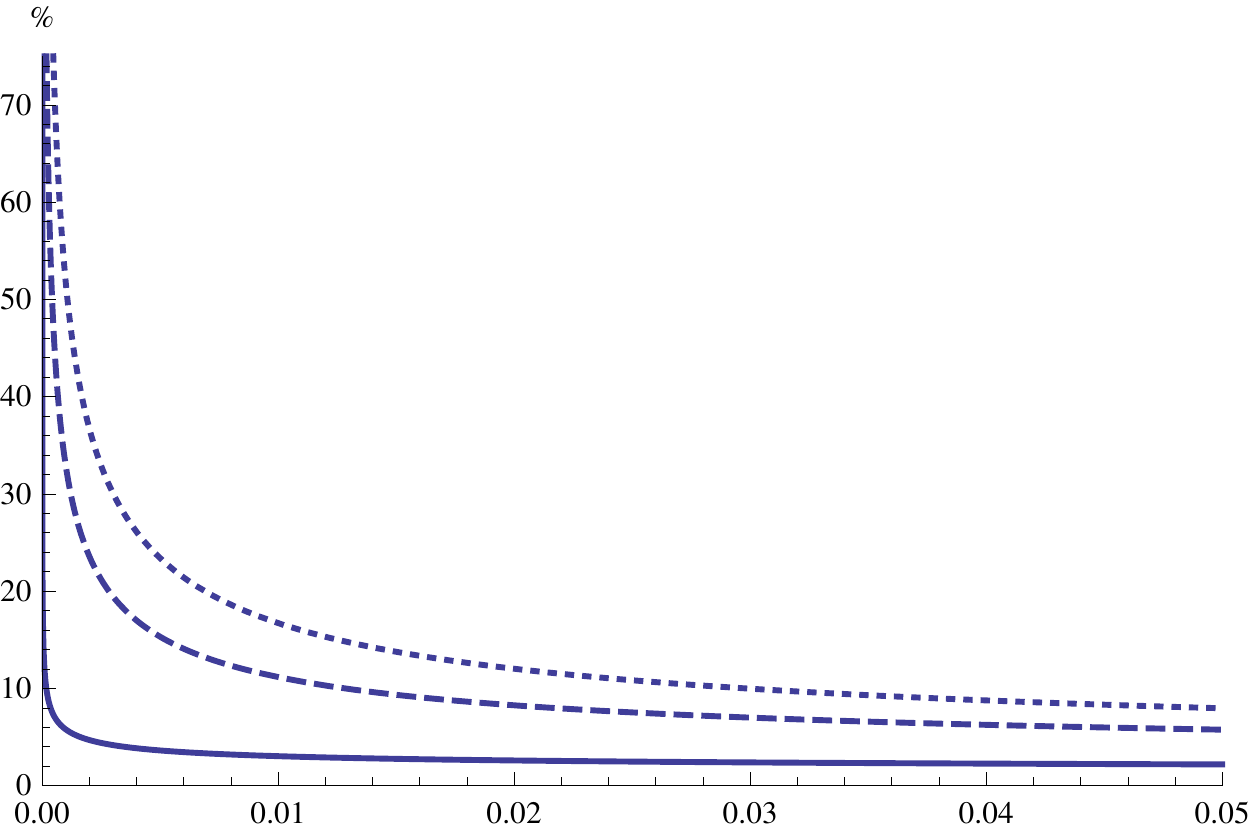}}\hfill
\subfigure{\includegraphics[width=0.49\textwidth]{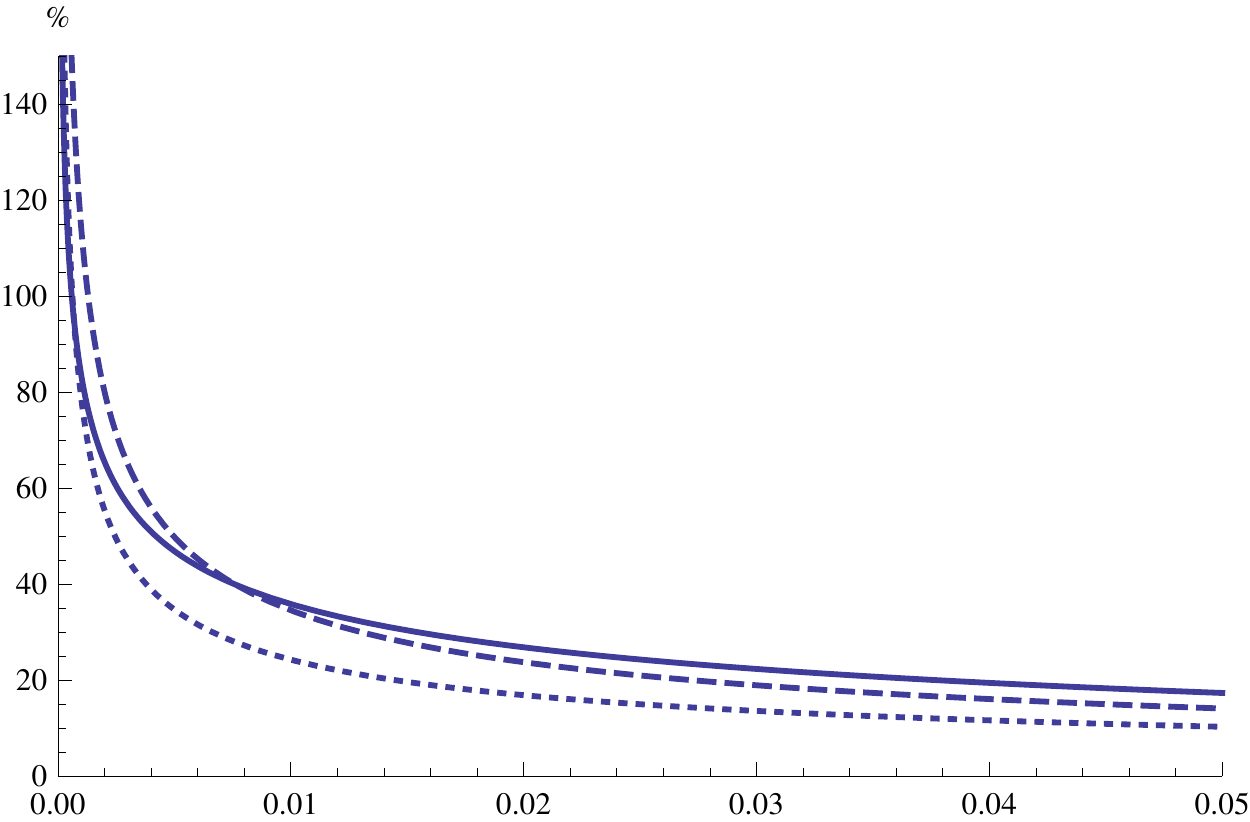}}
\caption{Left panel: Share turnover (vertical axis, annual fractions traded) plotted against the bid-ask spread $\epsilon$ (horizontal axis) for an unconstrained weight $\pi_*=62.5\%$ without leverage (solid), and with constraints $\pi_{\max}$ = 50\% (dashed) and 40\% (dotted). Right panel: Share turnover against the spread for an unconstrained weight of $\pi_*=390\%$ with leverage (solid), and with constraints $\pi_{\max}=225\%$ (dashed) and $175\%$ (dotted).  Model parameters are $\mu = 8\%$, $\sigma = 16\%$ and risk aversion is $\gamma=5$ resp.\ $\gamma=0.8$. }
\label{fig:shtu}
\end{figure}

The corresponding results for wealth turnover are more involved and are omitted for brevity.

\subsection{Turnover, Spreads, and Liquidity Premia}\label{relation}
In a model with transaction costs but without constraints, Gerhold et al.~\cite{gerhold.al.11} pointed out the following connection between the welfare impact of small transaction costs and the turnover implied by the optimal strategy:\footnote{Here, both quantities have to be measured consistently, either focusing on the risky asset (liquidity premium and share turnover) or on the whole market (equivalent safe rate and wealth turnover).} 
\begin{equation}\label{connection}
\left(r+ \frac{\mu^2}{2 \gamma \sigma^2}\right)-\esr \sim \frac{3}{4}\epsilon \wet  \quad  \text{and} \quad \lip \sim  \frac{3}{4}\epsilon \sht .
\end{equation} 
The interpretation is that the unobservable welfare effect of small transaction costs is approximately equal to a product of observables: trading volume, times the bid-ask spread, times a universal constant. This shows that the comparative statics of both quantities coincide, and allows to estimate liquidity premia from data on trading volume.

In the presence of binding constraints, the asymptotic rates of all involved quantities change. On the contrary, the link \eqref{connection} between them remains valid, up to changing the constants. To see this, first note that in the presence of the constraint but without transaction costs, the equivalent safe rate is given by $r+\frac{\mu^2}{2\gamma\sigma^2}(\frac{2\pi_{\max}}{\pi_*}-(\frac{\pi_{\max}}{\pi_*})^2)$. The additional reduction due to small transaction costs is therefore approximately equal to:
$$r+\frac{\mu^2}{2\gamma\sigma^2}\left(\frac{2\pi_{\max}}{\pi_*}-\left(\frac{\pi_{\max}}{\pi_*}\right)^2\right)-\esr \sim  \gamma\sigma^2 \left(\frac{1}{\gamma} (\pi_*-\pi_{\max})(1-\pi_{\max})^2\pi_{\max}^2\right)^{1/2}\epsilon^{1/2}.$$
In view of the asymptotic expansion for wealth turnover, this shows that the extra impact of transaction costs on the equivalent safe rate remains proportional to wealth turnover times the spread, also in the constrained case:
$$r+\frac{\mu^2}{2\gamma\sigma^2}\left(\frac{2\pi_{\max}}{\pi_*}-\left(\frac{\pi_{\max}}{\pi_*}\right)^2\right)-\esr \sim \epsilon \wet.$$
Similarly, notice that the total liquidity premium $\lip$ in Theorem \ref{main result} can be asymptotically decomposed into the liquidity premia $\lip^{\mathrm{C}}$ and $\lip^{\mathrm{T}}$ required to compensate for the constraints alone and the additional effect of the transaction costs:
\begin{displaymath}
\lip \sim  \mu\left(1-\sqrt{2\tfrac{\pi_{\max}}{\pi_*}-(\tfrac{\pi_{\max}}{\pi_*})^2}\right)+\gamma\sigma^2\left(\tfrac{1}{\gamma}\tfrac{\pi_{\max}(\pi_*-\pi_{\max})(1-\pi_{\max})^2}{2\pi_{*}-\pi_{\max}}\right)^{1/2}\epsilon^{1/2}:=\lip^\mathrm{C}+\lip^\mathrm{T}.
\end{displaymath}
With this notation, we obtain the following analogue of the second relation in \eqref{connection}:
$$ 
\lip^\mathrm{T} \sim \left(\frac{\pi_{\max}}{2\pi_*-\pi_{\max}}\right)^{1/2} \epsilon \sht.
$$
Hence, this result is also robust to the additional portfolio constraints, up to one important caveat. Unlike for wealth turnover and the equivalent safe rate above, the constant linking share turnover and the liquidity premium accrued due to transaction costs depends on the constraints, thereby leading to different comparative statics. Indeed, whereas the leading term of share turnover is always increasing with harder constraints in the absence of leverage, $\pi_{*} \in (0,1)$, the effect on the liquidity premium can be ambiguous due the presence of the extra factor $[\pi_{\max}/(2\pi_*-\pi_{\max})]^{1/2}$, which is decreasing with harder constraints. This is in line with the numerical observation of \cite{dai.al.11} that ``the liquidity premium can be higher even though position limits are less binding''. Whereas this may or may not be the case in the absence of leverage, it is in fact the generic situation for a leveraged position $\pi_*>1$ and constraints $\pi_{\max} \in [1,\frac{1+2\pi_{*}}{3})$.

\begin{figure}
\subfigure{\includegraphics[width=0.49\textwidth]{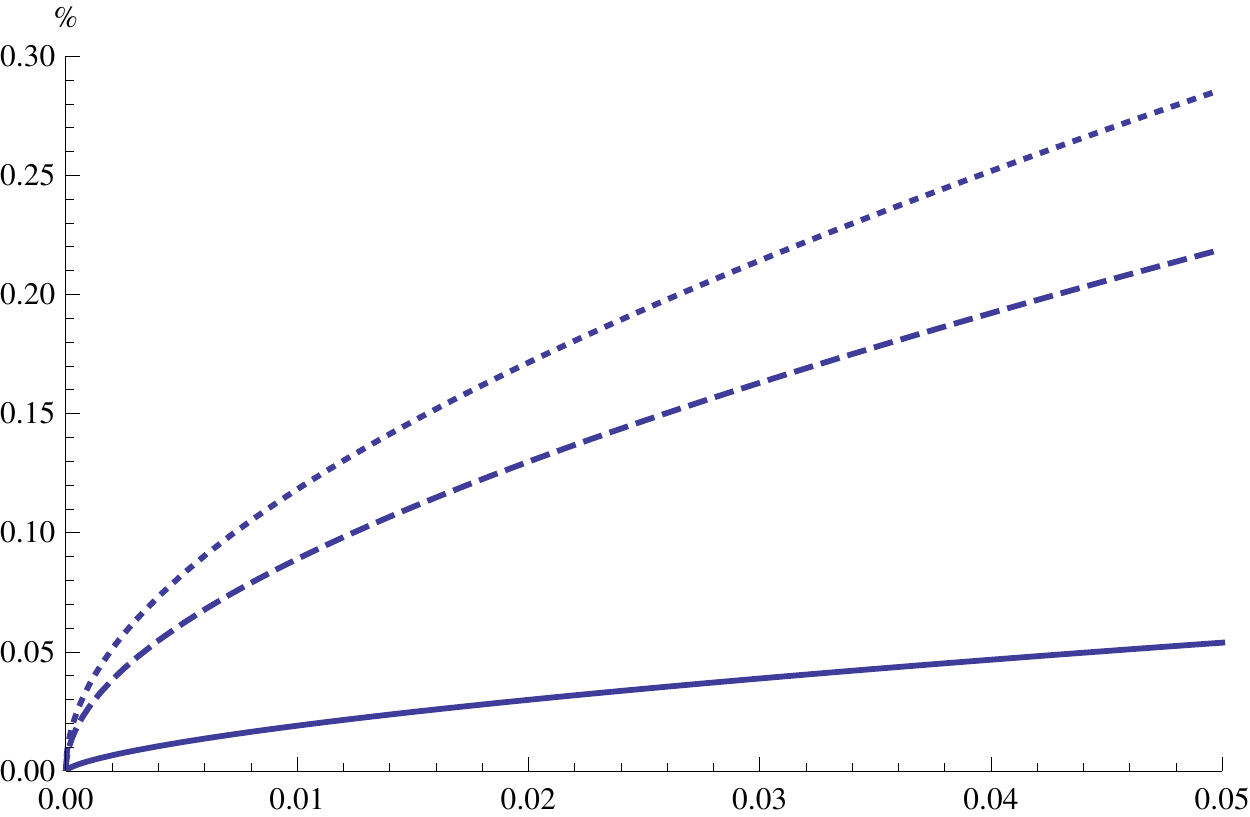}}\hfill
\subfigure{\includegraphics[width=0.49\textwidth]{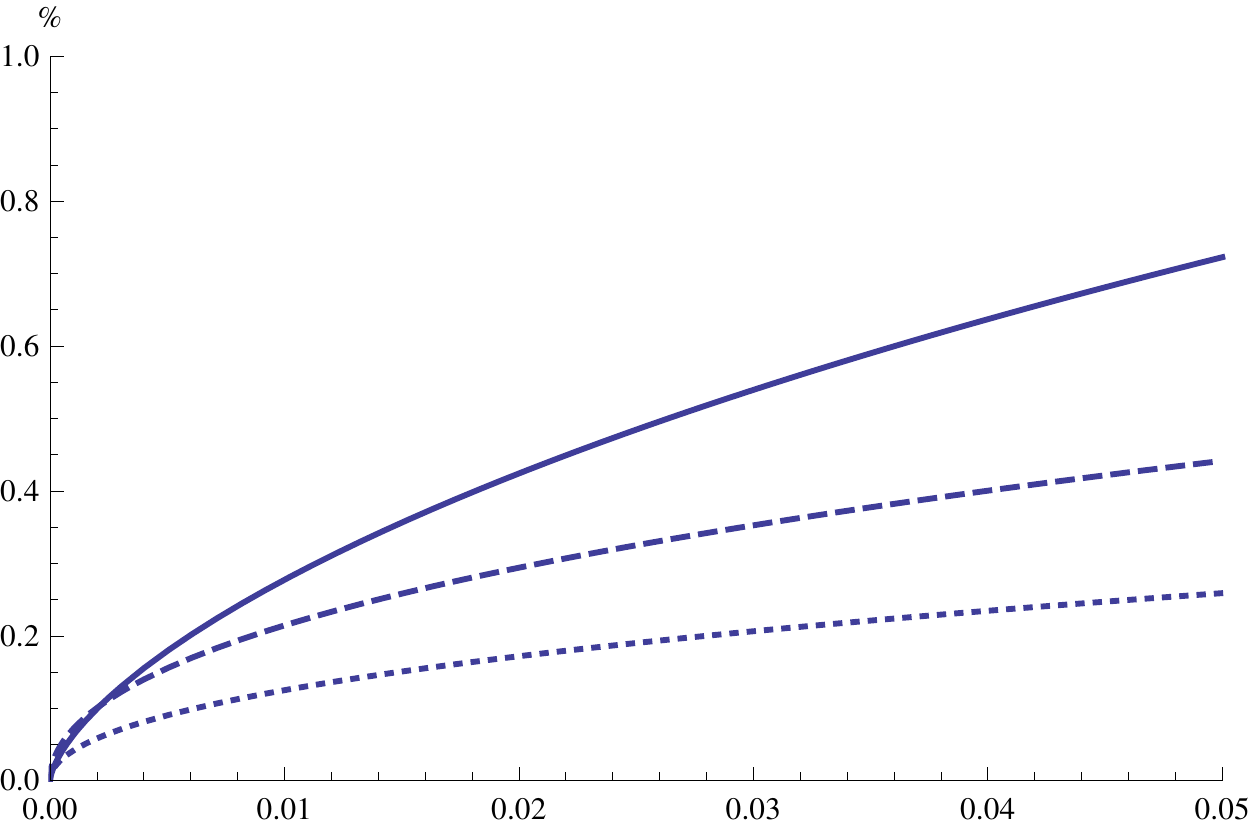}}
\caption{Left panel: Liquidity premium for transaction costs $\lip^\mathrm{T}$ (vertical axis) plotted against the bid-ask spread $\epsilon$ (horizontal axis) for an unconstrained weight $\pi_*=62.5\%$ without leverage (solid), and with constraints $\pi_{\max}$ = 50\% (dashed) and 40\% (dotted). Right panel: $\lip^\mathrm{T}$ against the spread for an unconstrained weight of $\pi_*=390\%$ with leverage (solid), and with constraints $\pi_{\max}=225\%$ (dashed) and $175\%$ (dotted).  Model parameters are $\mu = 8\%$, $\sigma = 16\%$ and risk aversion is $\gamma=5$ resp.\ $\gamma=0.8$. }
\label{fig:lipr_tac}
\end{figure}

These results are illustrated in Figure \ref{fig:lipr_tac}, where the liquidity premia $\lip^\mathrm{T}$ due to transaction costs are plotted against the spread in the unconstrained case and for two binding constraints. In the no-leverage regime depicted in the left panel, the constrained liquidity premia dominate their unconstrained counterparts for all levels of transaction costs, in line with the larger asymptotic rate. Moreover, the liquidity premium increases with tighter constraints, in accordance the asymptotic comparative statics. These also predict the correct effect in the leverage case reported in the right panel. Here,  $\pi_{\max} \in [1,\frac{1+2\pi_{*}}{3})$  such that the leading-order terms of the liquidity premia are decreasing with tighter constraints, which matches the numerical results. Analogously as for share turnover, however, the constrained liquidity premium only dominates the unconstrained one for sufficiently low transaction costs and not for arbitrary levels like in the unleveraged case.

\begin{figure}
\subfigure{\includegraphics[width=0.49\textwidth]{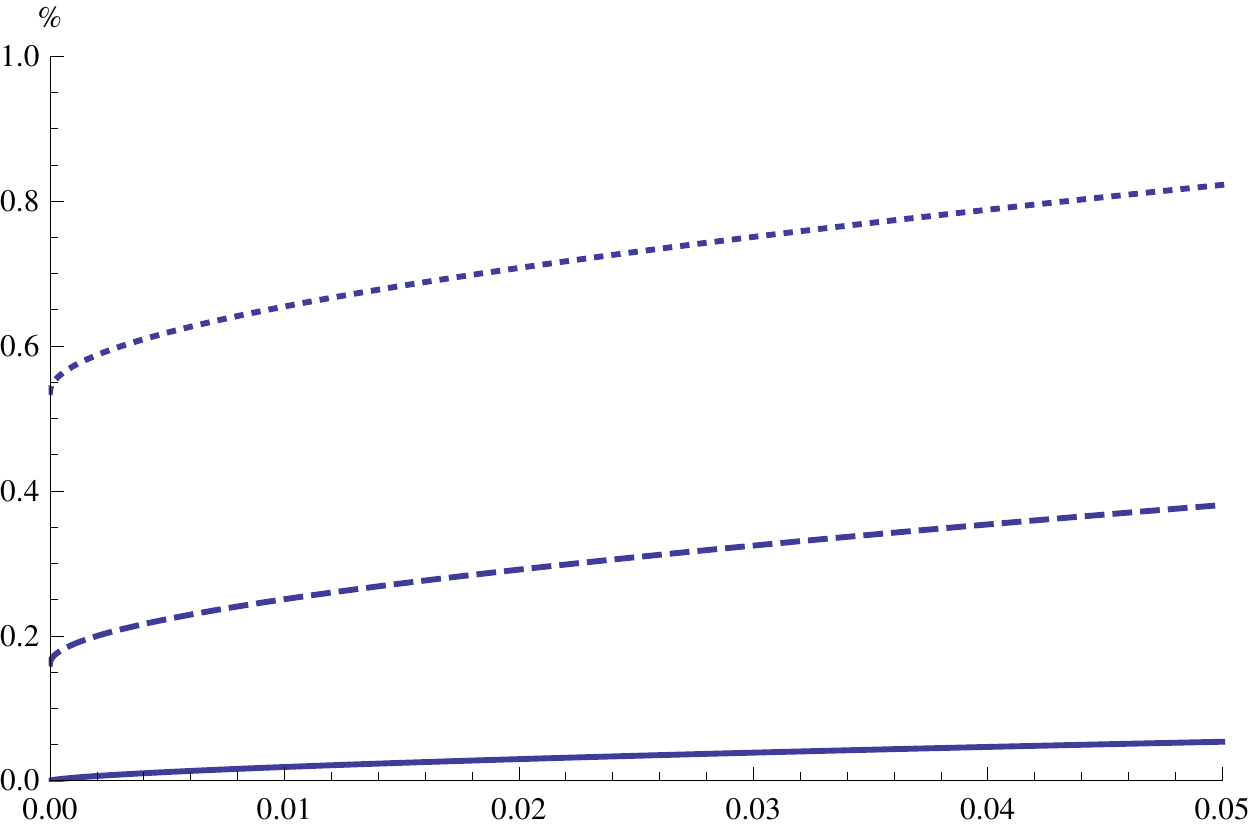}}\hfill
\subfigure{\includegraphics[width=0.49\textwidth]{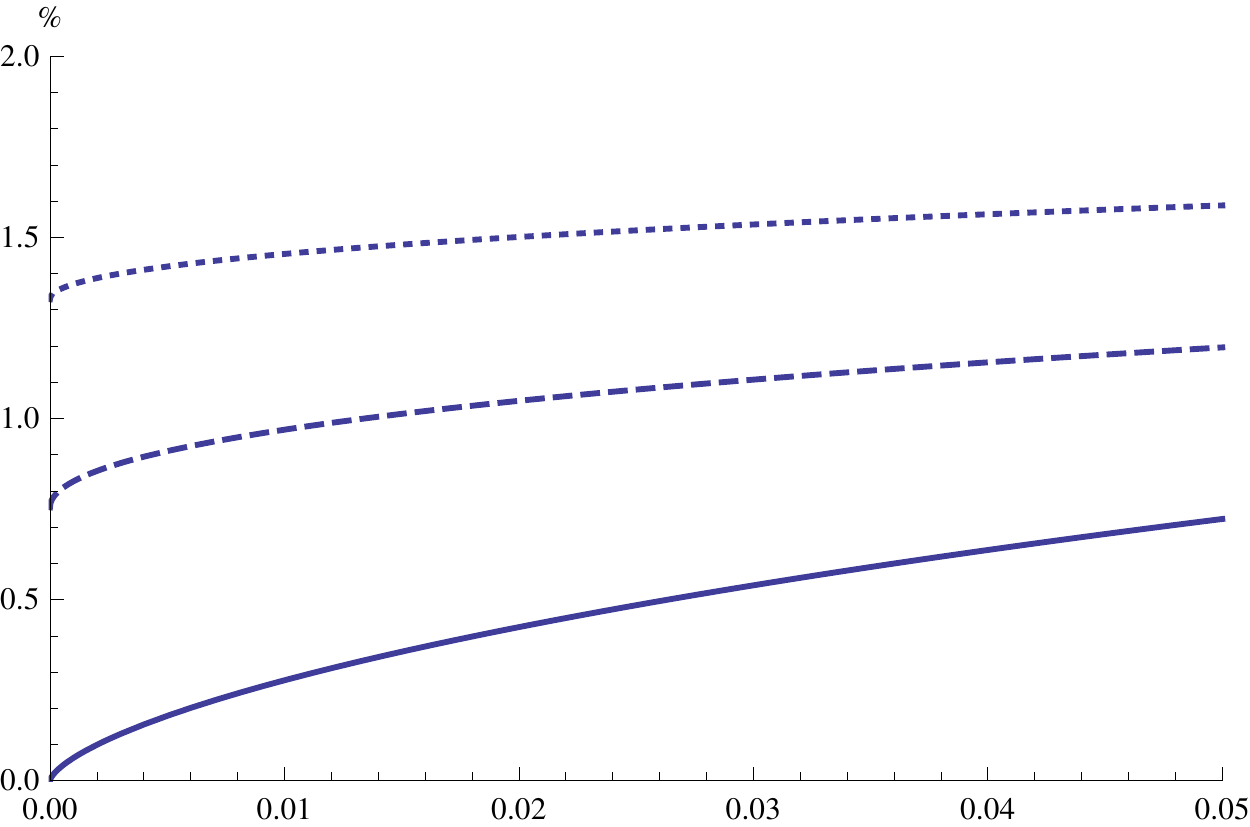}}
\caption{Left panel: Total liquidity premium $\lip$ (vertical axis) plotted against the bid-ask spread $\epsilon$ (horizontal axis) for an unconstrained weight $\pi_*=62.5\%$ without leverage (solid), and with constraints $\pi_{\max}$ = 50\% (dashed) and 40\% (dotted). Right panel: $\lip$ against the spread for an unconstrained weight of $\pi_*=390\%$ with leverage (solid), and with constraints $\pi_{\max}=225\%$ (dashed) and $175\%$ (dotted).  Model parameters are $\mu = 8\%$, $\sigma = 16\%$ and risk aversion is $\gamma=5$ resp.\ $\gamma=0.8$. }
\label{fig:lipr_total}
\end{figure}

Since the total liquidity premium $\lip$ measures the joint impact of constraints and transaction costs, it is interesting to discriminate between the relative contributions of these two market frictions. Figure \ref{fig:lipr_total} depicts the total liquidity premia for the same examples as before. Evidently, the impact of the transaction costs is larger in the leverage case, as in the absence of constraints~\cite{gerhold.al.11}, but the effect is diminished severely as the constraints bind harder. In contrast, the transaction costs have a much bigger influence than in the unconstrained case in the absence of leverage.

\subsection{Applications}
\subsubsection*{Selection of Prime Brokers}

Consider an investor choosing which \emph{prime broker} to use to buy a leveraged risky position \emph{on margin}. Each broker is willing let the investor borrow from him at a lending rate $r>0$ and up to a leverage constraint $\pi_{\max}>1$,\footnote{Put differently, the broker's \emph{margin requirement} is $1/\pi_{\max}$.} thereby allowing the investor to trade in a market with safe rate $r$, excess return $\bar\mu-r$ of the risky asset ($\bar\mu$ denotes the total return), and leverage constraint $\pi_{\max}$.

Without transaction costs, the same equivalent safe rate $\mathrm{esr}$ can be achieved if
\begin{equation}\label{eq:endr}
r=\frac{2\pi_{\max}\bar\mu-2\mathrm{esr}-\pi_{\max}^2 \gamma\sigma^2}{2(\pi_{\max}-1)}.
\end{equation}
That is, all pairs $(r,\pi_{\max})$ satisfying this relation lie on the same iso-utility curve for the investor, since the effect of having to pay a higher lending rate is precisely offset by the opportunity to borrow a larger amount for investing. This is illustrated in Figure \ref{fig:esr1}.

With transaction costs, the investor is no longer indifferent between these different brokers, and our model allows to asses which combinations are more attractive if the respective rebalancing costs are taken into account. Indeed, for pairs $(r,\pi_{\max})$ satisfying \eqref{eq:endr}, the equivalent safe rate with transaction costs is given by
$$\mathrm{esr}-\left(\frac{\sigma^2}{2}\pi_{\max}^2(\pi_{\max}-1)(2\mathrm{esr}-2\bar\mu-(\pi_{\max}-2)\pi_{\max}\gamma\sigma^2)\right)^{1/2}\epsilon^{1/2}+O(\varepsilon).$$
If $\pi_{\max}=1$, the investor follows a buy-and-hold strategy and the transaction costs have no impact. As the constraint becomes softer, the effect of transaction costs increases, reaches its maximum at a critical level $\pi_{\max}^\mathrm{c}$ and then decreases again, vanishing at the leading order $\epsilon^{1/2}$ as the constraint $\pi_{\max}$ tends towards the frictionless Merton proportion $\pi_*=(\bar\mu-r)/\gamma\sigma^2$.\footnote{For this value, the welfare impact of transaction costs is only of order $\epsilon^{2/3}$ as in the unconstrained case \cite{gerhold.al.11}.} Consequently, for sufficiently hard constraints, the investor prefers tighter constraints and lower lending rates to softer constraints and higher lending rates, but the picture is reversed for sufficiently soft constraints, as illustrated in Figure \ref{fig:esr1}.

\begin{figure}
\subfigure{\includegraphics[width=0.49\textwidth]{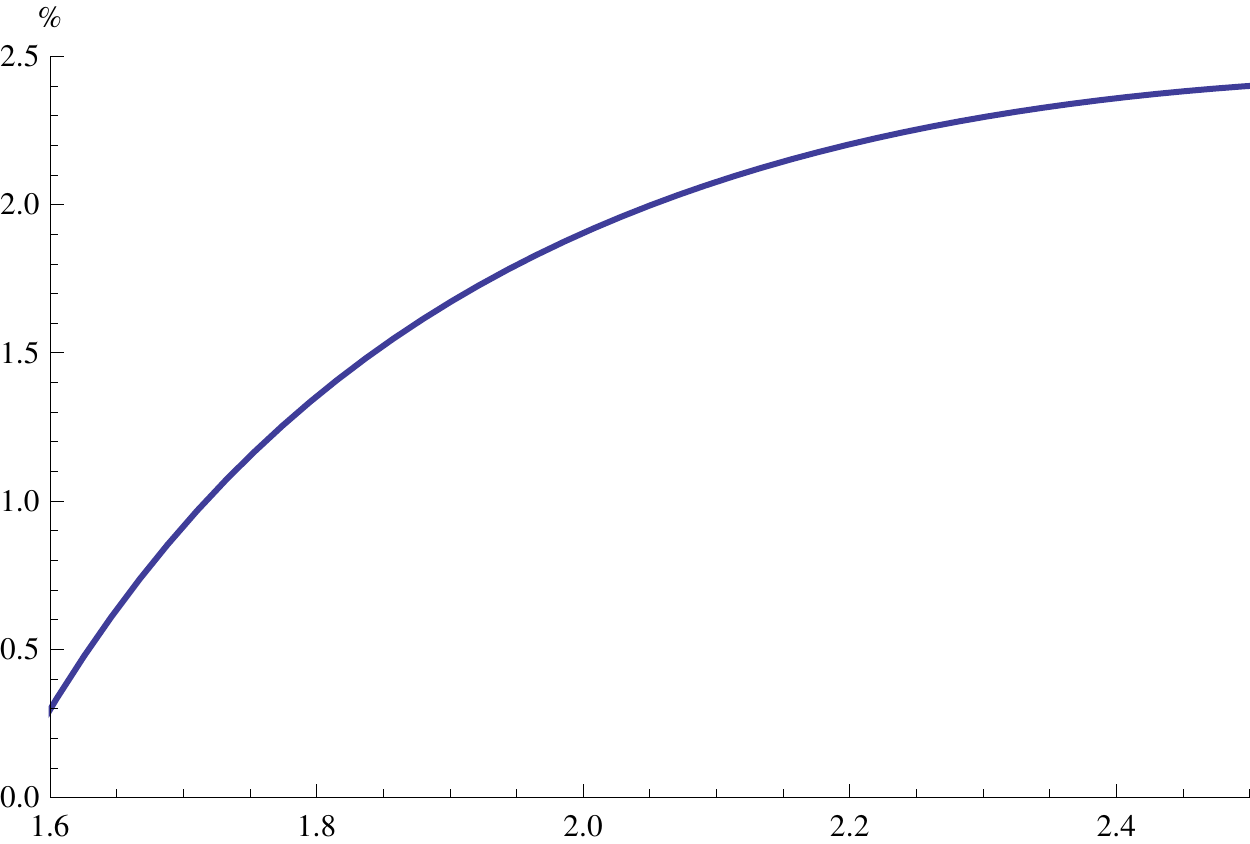}}
\subfigure{\includegraphics[width=0.49\textwidth]{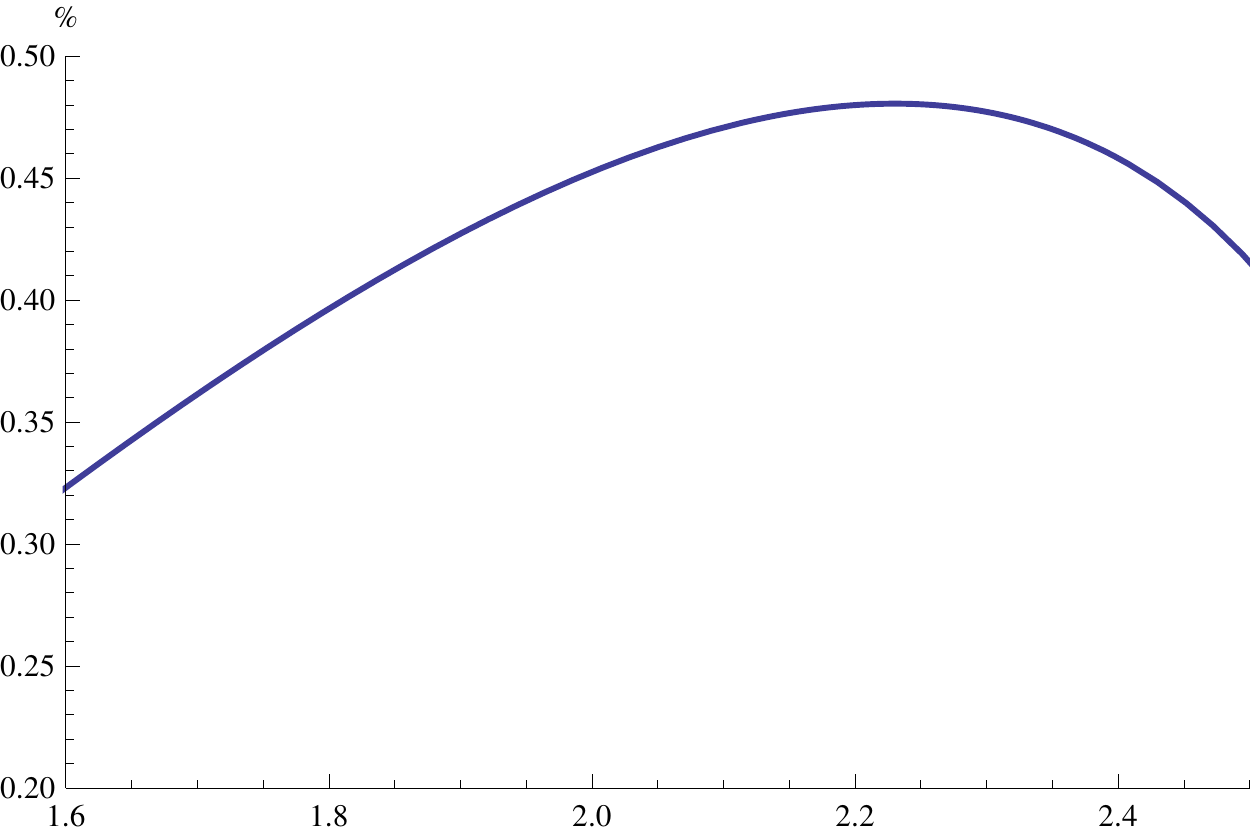}}\hfill
\caption{Left panel: $10\%$ equivalent safe rate indifference curve without transaction cost plotted against the interest rate $r$ (vertical axis) and the constrained weight $\pi_{\max}$ (horizontal axis, the corresponding unconstrained weight is $271\%$). Right panel: Leading-order loss in equivalent safe rate due to transaction costs $\epsilon=1\%$ plotted against the constrained weight $\pi_{\max}$ with an interest rate $r$ such that the equivalent safe rate without transaction cost is always 10\% . Model parameters are $\bar\mu = 8\%$, $\sigma = 16\%$, and risk aversion is $\gamma=0.8$.}
\label{fig:esr1}
\end{figure}

\begin{figure}
\subfigure{\includegraphics[width=0.49\textwidth]{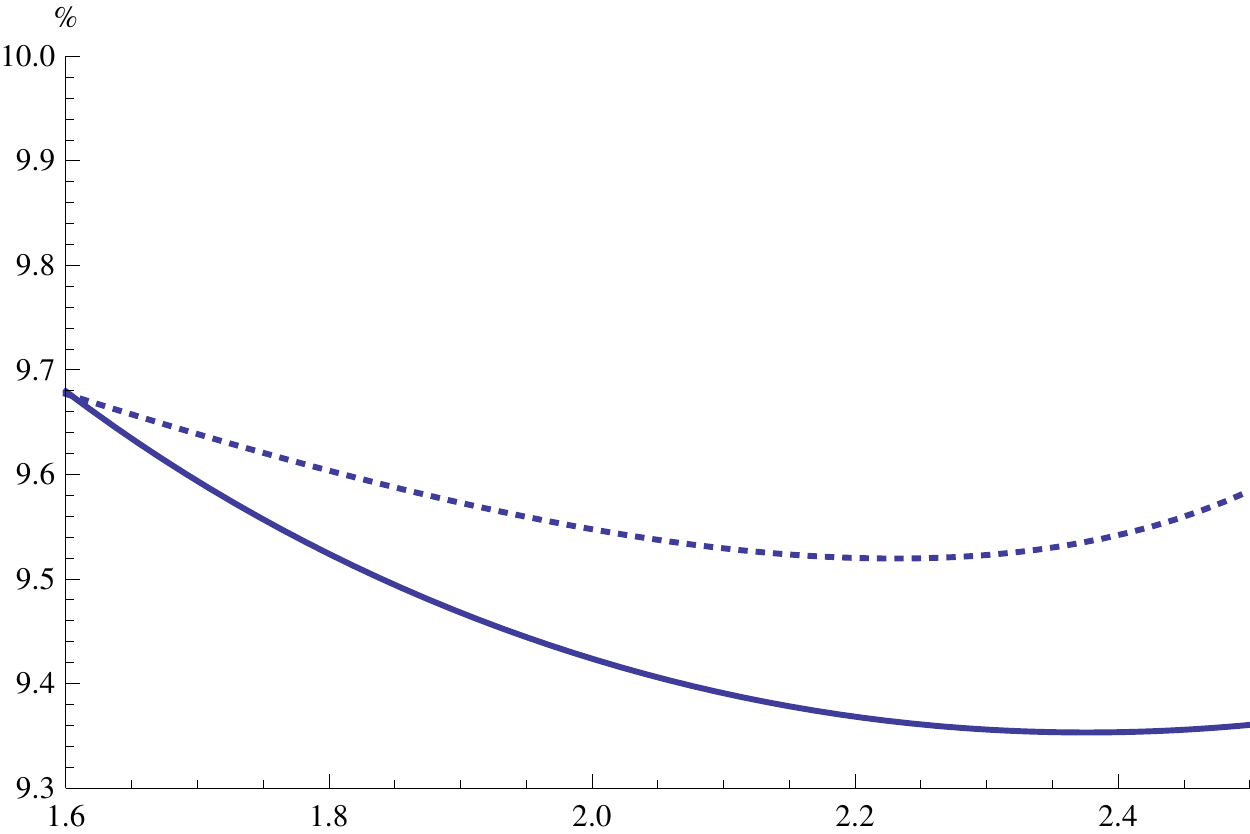}}
\subfigure{\includegraphics[width=0.49\textwidth]{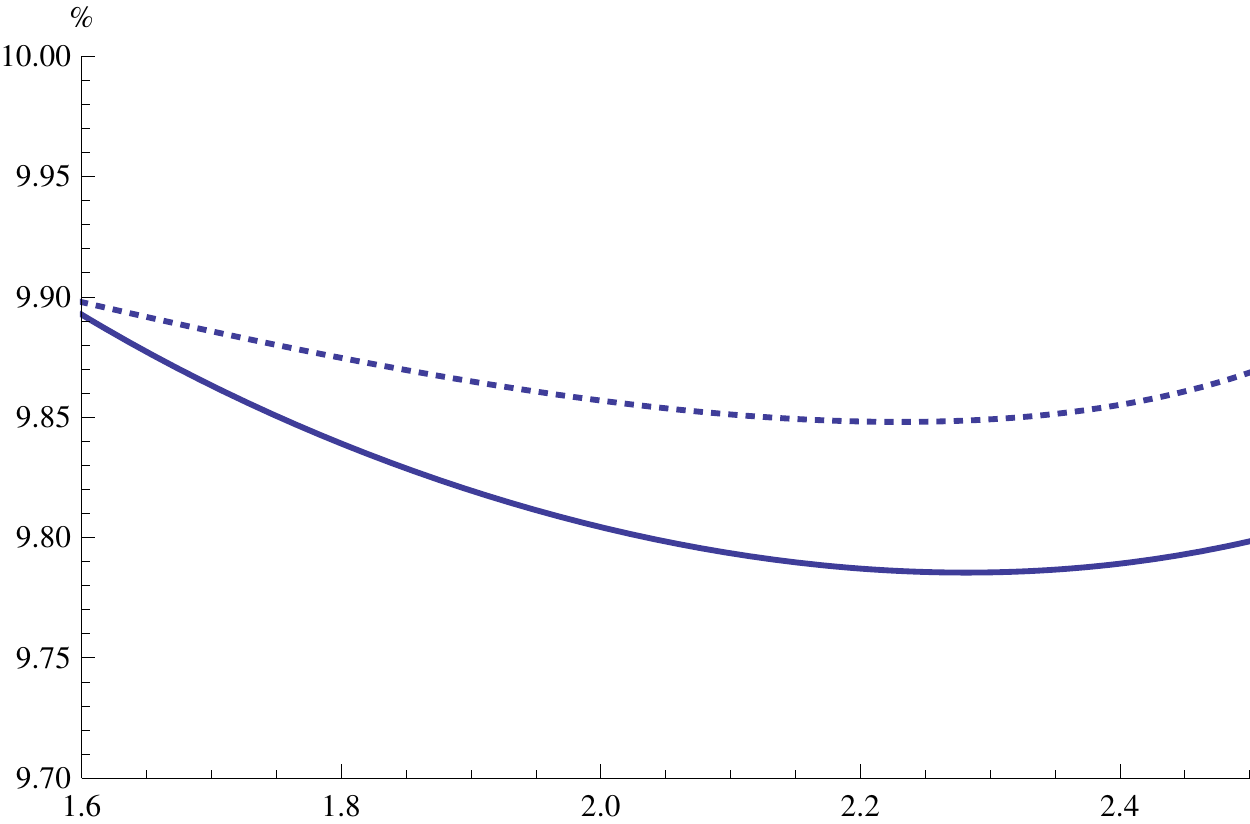}}\hfill
\caption{ Exact (solid) and approximate equivalent safe rate (dotted) with transaction costs (vertical axis, left panel: $1\%$, right panel: $0.1\%$) plotted against the constrained weight $\pi_{\max}$ with an interest rate $r$ such that the equivalent safe rate without transaction cost is always 10\% . Model parameters are $\bar\mu = 8\%$, $\sigma = 16\%$, $\pi_{\max} = 271\%$ and risk aversion is $\gamma=0.8$.}
\label{fig:esr2}
\end{figure}

The above leading-order analysis neglects that the effect of transaction costs does not vanish exactly as the constraint approaches the Merton proportion; only the leading term of order $\epsilon^{1/2}$ tends to zero. More precisely, it is replaced by a nontrivial term of order $\epsilon^{2/3}$ in the limit. Hence, portfolios with soft constraints are in fact less attractive than suggested by the above analysis. This is illustrated in Figure \ref{fig:esr2}, which compares the exact and leading-order equivalent safe rates that can be obtained with pairs $(r,\pi_{\max})$ satisfying \eqref{eq:endr}. Whereas the approximation recaptures the qualitative properties of the exact quantity rather well, it severely overestimates the attractiveness of brokers with a soft leverage constraint (i.e., a low margin requirement) and a high lending rate. With the exact quantities, it turns out that the investor typically\footnote{That is, unless the constraints are barely binding.} prefers tighter leverage constraints and lower lending rates, as transaction costs make highly leveraged positions relatively less attractive than they appear to be in frictionless markets.

\subsubsection*{Illiquid Loans and Deposit Rates}

The application in the previous section can also be reinterpreted as follows. Consider a bank, who can borrow from its depositors at a safe rate $r$ to provide illiquid (long-term) loans, whose book values are assumed to follow geometric Brownian motion with constant drift $\bar \mu$ and volatility $\sigma$. To limit excessive risk-taking, regulating authorities restrict the amount of leverage financial institutions are allowed to use by setting minimal capital requirements, which correspond to the portfolio constraints in our model. 

This setting allows to study how the deposit rates offered by the bank react to harder regulatory constraints. The idea is that the bank will try to achieve the same performance (measured in terms of the equivalent safe rate) with the new constraints. Since tightening the constraints reduces the equivalent safe rate, this means that the bank will decrease its deposit rate. In the absence of transaction costs, i.e., for loans that can be liquidated at their book values, the new deposit rate can be obtained using Formula \eqref{eq:endr}. However, most long-term loans are not very liquid, incurring substantial transaction costs when liquidated prior to maturity. The corresponding change in the deposit rate can in turn be determined by numerically solving for the safe rate in Theorem \ref{main result} that makes the equivalent safe rate with the new constraint coincide with its counterpart for the old constraint and the old deposit rate. 

The results for a concrete example are provided in Table \ref{tab:deposit rate}. In the literature, banks are often modeled as risk-neutral, simply maximizing the present value of future cash flows. Therefore, we also use a low risk aversion ($\gamma=0.1$) here. As the transaction costs incurred when prematurely liquidating long-term loans are substantial (in particular, compared to the bid-ask spreads observed for equities), we report results for a relative bid-ask spread of up to $10\%$. It turns out that harder regulatory constraints decrease the deposit rates the most if the long-term loans the bank provides are assumed to be perfectly liquid. If illiquidity is taken into account, then highly leveraged positions are less attractive, and a substantially smaller reduction of the deposit rate is required to compensate the bank for the tighter constraints.

\begin{table}
\begin{center}
\begin{tabular}{llll}
\hline
Constrained weights $\pi_{\max}$ & Transaction Costs $\epsilon$ & Deposit Rates $r$ &corresponding ESR\\
\hline
\multirow{4}{*}{$220 \%$} & $0\%$ & $5.82\%$ & $10\%$ \\
 & $0.1\%$ & $5.82\%$ & $9.84\%$\\
 & $1\%$ & $5.82\%$ & $9.54\%$\\
 & $10\%$ & $5.82\%$ &$8.86\%$ \\
\hline
\multirow{4}{*}{$180 \%$} & $0\%$ &$4.98\%$ & $10\%$ \\
 & $0.1\%$ & $5.05\%$ &$9.84\%$\\
 & $1\%$ &$5.16\%$ & $9.54\%$ \\
 & $10\%$ & $5.43\%$ & $8.86\%$\\
\hline
 \multirow{4}{*}{$150 \%$} & $0\%$ &$3.42\%$ & $10\%$ \\
 & $0.1\%$ & $3.60\%$ &$9.84\%$\\
 & $1\%$ &$3.93\%$ & $9.54\%$ \\
 & $10\%$ & $4.68\%$ & $8.86\%$\\
\hline
\end{tabular}
\end{center}
\caption{Changes in deposit rates $r$ due to a decrease of the upper bound of the risky weight from $220\%$ to $180\%$ resp. to $150\%$, in order to retain the initial level of the equivalent safe rate. Model parameters are $\bar \mu = 0.08$, $\sigma = 16\%$ and risk aversion is $\gamma = 0.1$.}
\label{tab:deposit rate}
\end{table}

\section{Heuristics}\label{heuristic}
In this section, we use methods from stochastic control to heuristically derive a candidate solution.


Fix an upper bound $0<\pi_{\max}=\kappa \pi_* \not=1$ on the investor's risky weight and consider the problem of maximizing the expected power utility $U(x) = x^{1-\gamma}/(1-\gamma)$ from terminal wealth at time $T$. Denote by $V(t,X_t^0,X_t)$ its value function, which is assumed to depend on time as well as the positions $X^0 = \varphi^0 S^0$ in the safe and $X= \varphi S$  in the risky asset, evaluated in terms of the ask price. Then, by It\^o's formula and the self-financing condition \eqref{eq:sf}:
\begin{align*}
dV(t,X^0_t,X_t)=&\left(V_t+rX^0_t V_x+(\mu+r)X_t V_y+\tfrac{\sigma^2}{2} X_t^2 V_{yy}\right)dt+\sigma X_t V_y dW_t\\
&\quad +S_t(V_y-V_x)d\varphi^\uparrow_t + S_t((1-\epsilon)V_x-V_y)d\varphi^\downarrow_t,
\end{align*}
where the arguments are omitted for brevity. By the martingale optimality principle, this process has to be a supermartingale for any strategy, and a martingale for the optimizer, leading to the HJB equation:
\begin{displaymath}
V_t + r X_t^0V_x + (\mu+r)X_t V_y+ \frac{\sigma^2}{2}X_t^2V_{y y}=0, \quad \text{if} \quad 1 < \frac{V_x}{V_y} < \frac{1}{1-\epsilon}.
\end{displaymath}
The homotheticity of the value function and the observation that -- in the long-run -- the value function should grow exponentially with the horizon at a constant rate $r+\beta$ suggest the following representation:\footnote{This representation is valid if the position in the safe asset is non-negative at all times. In the leverage case $\pi_{\max} > 1$ one has to factor out $-X_t^0$ instead of $X_t^0$, leading to analogous calculations.}
\begin{displaymath}
V(t,X_t^0,X_t) = (X_t^0)^{1-\gamma}v (X_t/X_t^0)e^{-(1-\gamma)(r+\beta)t}.
\end{displaymath}
Setting $z:= y/x$ the HJB equation becomes
\begin{displaymath}
\frac{\sigma^2}{2}z^2v''(z) + \mu z v'(z) - (1-\gamma)\beta v(z) = 0, \quad \text{if} \quad 1+z < \frac{(1-\gamma)v(z)}{v'(z)} < \frac{1}{1-\epsilon}+z.
\end{displaymath}
The set $\{z: 1+z < \frac{(1-\gamma)v(z)}{v'(z)} < \frac{1}{1-\epsilon}+z\}$ corresponds to those values of the stock-cash ratio $X/X^0$ for which the optimal strategy does not move, i.e., the no-trade region. To simplify further, assume that it is given by an interval $l< z < u$, where the lower boundary $l$ is an unknown parameter and the upper boundary $u$ coincides with the constraint, $u=\pi_{\max}/(1-\pi_{\max})$.\footnote{Without constraints and transaction costs, the investor would hold an even larger risky weight. Therefore it is natural to assume that the upper selling boundary coincides with the highest value compatible with the constraints.} Then, one obtains the following system with one free boundary:
\begin{eqnarray}
z^2v''(z)\sigma^2/2 + \mu z v'(z) - (1-\gamma)\beta v(z) &=& 0, \quad \text{if} \quad l < z < u,\label{ode}\\
(1+l)v'(l) -(1-\gamma)v(l) &=& 0,\label{freeboundary1}\\
\left(1/(1-\epsilon)+u\right)v'(u) -(1-\gamma)v(u) &=& 0.\label{freeboundary2}
\end{eqnarray}
These conditions do not suffice to identify the solution, since the ODE~\eqref{ode} is of order two and the conditions~\eqref{freeboundary1} and~\eqref{freeboundary2} can be matched for any choice of the buying boundary $l$. The optimal buying boundary $l$ is the one that additionally satisfies a smooth pasting condition \cite{dumas.91}, obtained by formally differentiating~\eqref{freeboundary1}:\footnote{Here, ``smooth'' means $C^2$ across the boundary. Note that this need not hold at the upper selling boundary, since the latter is fixed by the constraints.}
\begin{equation}\label{boundary l'}
(1+l)v''(l)+\gamma v'(l) = 0.
\end{equation}
Substituting~\eqref{boundary l'} and~\eqref{freeboundary1} into~\eqref{ode} yields
\begin{equation}\label{piequation}
-\frac{\sigma^2 \gamma}{2}\left(\frac{l}{1+l}\right)^2+\mu \frac{l}{1+l}-\beta = 0.
\end{equation}
The smaller solution of this quadratic equation determines the lower buying boundary:
\begin{displaymath}
\pi_- =\frac{l}{1+l}= \frac{\mu}{\gamma \sigma^2} - \frac{\sqrt{\mu^2-2\beta \gamma \sigma^2}}{\gamma \sigma^2}.
\end{displaymath}
Set $\pi_{-}=(1-\lambda)\pi_{\max}=(1-\lambda)\kappa \pi_*$ for some $\lambda>0$. Then, the growth rate $\beta$ can be written as
\begin{displaymath}
\beta = \frac{\mu^2}{2 \gamma \sigma^2}(1-(1-\kappa(1-\lambda))^2).
\end{displaymath}
$\lambda$ is called the \emph{gap}, because it describes the deviation of the frictional buying boundary and growth rate from their frictionless counterparts. Since the buying boundary is determined by $\lambda$, the above free-boundary value problem becomes a fixed-boundary value problem with free parameter $\lambda$ in this notation. The substitution
\begin{displaymath}
v(z)=e^{(1-\gamma)\int_0^{\log{(z/l(\lambda))}}w(y)d y}, \quad \text{i.e.,} \quad w(y)=\frac{l(\lambda)e^y v'(l(\lambda)e^y)}{(1-\gamma)v(l(\lambda)e^y)},
\end{displaymath}
in turn reduces it to a Riccati ODE: 
\begin{equation}\label{Riccati}
0 = w'(x) + (1-\gamma)w(x)^2 +(\tfrac{2\mu}{\sigma^2}-1)w(x) -\tfrac{\mu^2}{\gamma\sigma^4}(1-(1-\kappa(1-\lambda))^2),\qquad x\in \left[0,\log{(u/l(\lambda))}\right],
\end{equation}
with boundary conditions
\begin{align}
w(0) &= \frac{l(\lambda)}{1+l(\lambda)} = (1-\lambda)\pi_{\max},\label{w0x}\\
w\left(\log{\left(\frac{u}{l(\lambda)}\right)}\right) &= \frac{u(1-\epsilon)}{1+u(1-\epsilon)}=\frac{\pi_{\max}(1-\epsilon)}{(1-\pi_{\max})+\pi_{\max}(1-\epsilon)},\label{wend}
\end{align}
where
\begin{equation}\label{logul}
\frac{u}{l(\lambda)}=\frac{\pi_{\max}/(1-\pi_{\max})}{\pi_-(\lambda)/(1-\pi_{-}(\lambda))}=\frac{\pi_{\max}/(1-\pi_{\max})}{(1-\lambda)\pi_{\max}/(1-(1-\lambda)\pi_{\max})}.
\end{equation}
Since the Riccati ODE is of order one, the initial condition~\eqref{w0x} uniquely determines a solution $w(\lambda,\cdot)$ for any choice of $\lambda$. The correct one is then identified by the terminal condition~\eqref{wend}. Even though the Riccati ODE~\eqref{Riccati}-\eqref{w0x} can be solved explicitly (cf.\ Lemma~\ref{solw}), it is not possible to solve for $\lambda$ in closed form. However, the implicit function theorem readily yields a fractional power series expansion in $\varepsilon$ (cf.\ Lemma \ref{uniquelambda}), which in turn immediately provides the asymptotics for the buying boundary $\pi_{-}=(1-\lambda)\pi_{\max}$ and the growth rate $\beta=\frac{\mu^2}{2 \gamma \sigma^2}(1-(1-\kappa(1-\lambda))^2)$.

\section{Proofs}\label{proof}

\subsection{Explicit Formulae and their Properties}
The first step towards a rigorous verification theorem is to determine an explicit expression for the solution of the Riccati ODE~\eqref{Riccati} with initial condition~\eqref{w0x}, given a sufficiently small $\lambda>0$.
\begin{mylemma}\label{solw}
Let $0 < \pi_{\max}=\kappa\pi_* \not= 1$ and define $w(\lambda,\cdot)$ by
\begin{displaymath}
w(\lambda,x) :=
\begin{cases}
\frac{a(\lambda)\tanh{\left(\tanh^{-1}{(b(\lambda)/a(\lambda))}-a(\lambda)x\right)}+\mu/\sigma^2-1/2}{\gamma-1},\\
\qquad \qquad \qquad \qquad \text{if  } \gamma \in (0,1) \text{  and  }0 < \pi_{\max} <1, \text{  or  } \gamma > 1 \text{  and  }\pi_{\max} > 1,\\
\frac{a(\lambda)\tan{\left(\tan^{-1}{(b(\lambda)/a(\lambda))}+a(\lambda)x\right)}+\mu/\sigma^2-1/2}{\gamma-1},\\
\qquad \qquad \qquad \qquad \text{if  } \gamma > 1 \text{  and  } \pi_{\max} \in \left(\frac{1/\kappa-\sqrt{(1-1/\gamma)(2/\kappa-1)}}{2\left(\gamma/\kappa^2-(\gamma-1)(2/\kappa-1)\right)},\frac{1/\kappa+\sqrt{(1-1/\gamma)(2/\kappa-1)}}{2\left(\gamma/\kappa^2-(\gamma-1)(2/\kappa-1)\right)}\right),\\
\frac{a(\lambda)\coth{\left(\coth^{-1}{(b(\lambda)/a(\lambda))}-a(\lambda)x\right)}+\mu/\sigma^2-1/2}{\gamma-1},\\
\qquad \qquad \qquad \qquad \text{otherwise,}
\end{cases}
\end{displaymath}
with
\begin{eqnarray}
a(\lambda)&:=& \sqrt{\left|(\gamma-1)\frac{\mu^2}{\gamma \sigma^4}(1-(1-\kappa(1-\lambda))^2)-\left(\frac{1}{2}-\frac{\mu}{\sigma^2}\right)^2\right|},\nonumber\\
b(\lambda) &:=& \frac{1}{2}-\frac{\mu}{\sigma^2}+(\gamma-1)\pi_{\max}(1-\lambda).\nonumber
\end{eqnarray}
Then, for a sufficiently small $\lambda > 0$, the mapping $x \mapsto w(\lambda,x)$ is a local solution of 
\begin{align}
0 &= w'(x) + (1-\gamma)w(x)^2 +(\tfrac{2\mu}{\sigma^2}-1)w(x)\nonumber-\tfrac{\mu^2}{\gamma\sigma^4}(1-(1-\kappa(1-\lambda))^2),\nonumber\\
 w(0) &= (1-\lambda)\pi_{\max}.\label{w000}
\end{align}
Moreover, $x \mapsto w(\lambda,x)$ is increasing for $0<\pi_{\max} <1$ and decreasing for $\pi_{\max}>1$.
\end{mylemma}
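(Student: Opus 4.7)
The ODE in \eqref{w000} is autonomous, of Riccati type, so I would reduce it to one of three canonical forms by an affine change of variable, solve the canonical form by separation of variables, and verify the claim by translating back; monotonicity then falls out of a one-line computation at $x=0$.

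\textbf{Step 1: canonical reduction.} Set $q:=\mu/\sigma^2-1/2$ and $\tilde w(x):=(\gamma-1)w(x)-q$. A direct substitution shows that the ODE in \eqref{w000} is equivalent to
$$\tilde w'(x)\;=\;\tilde w(x)^2-D(\lambda),\qquad D(\lambda):=q^2-(\gamma-1)\tfrac{\mu^2}{\gamma\sigma^4}\bigl(1-(1-\kappa(1-\lambda))^2\bigr),$$
with initial condition $\tilde w(0)=(\gamma-1)(1-\lambda)\pi_{\max}-q=b(\lambda)$. Note that $a(\lambda)^2=|D(\lambda)|$.

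\textbf{Step 2: solve the canonical form.} Separation of variables gives three explicit solution families. When $D>0$, so $\tilde w'=\tilde w^2-a^2$, the solutions are $\tilde w(x)=-a\tanh(c-ax)$ if $|\tilde w(0)|<a$ and $\tilde w(x)=-a\coth(c-ax)$ if $|\tilde w(0)|>a$; when $D<0$, so $\tilde w'=\tilde w^2+a^2$, the solutions are $\tilde w(x)=a\tan(c+ax)$. In each case the constant $c$ is pinned down by $\tilde w(0)=b$, yielding exactly $c=\tanh^{-1}(b/a)$, $\coth^{-1}(b/a)$, or $\tan^{-1}(b/a)$ respectively (absorbing a sign into $a$). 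Undoing $\tilde w=(\gamma-1)w-q$ reproduces the three closed-form expressions stated in the lemma; standard Picard--Lindel\"of uniqueness for the (locally Lipschitz) Riccati right-hand side then identifies them with the actual solution of \eqref{w000}.

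\textbf{Step 3: match cases to parameter regimes.} The sign of $D(\lambda)$ (tanh/coth versus tan) and, when $D>0$, the comparison $|b|\lessgtr a$ (tanh versus coth), have to be aligned with the conditions stated in the lemma. The $\tan$ regime requires $D<0$, which forces $\gamma>1$; solving $D(\lambda)<0$ explicitly at $\lambda=0$ yields precisely the interval for $\pi_{\max}$ printed in the statement. For the split between tanh and coth when $D>0$, I would use the identity
$$a(\lambda)^2-b(\lambda)^2\;=\;-(\gamma-1)\,\pi_-\,(1-\pi_-),\qquad \pi_-:=(1-\lambda)\pi_{\max},$$
obtained by expanding $b^2$ and applying the quadratic relation $\beta=\mu\pi_--\tfrac{\sigma^2\gamma}{2}\pi_-^2$ for $\beta=\tfrac{\sigma^2}{2}(q^2-(\gamma-1)C)\cdot(\cdot)$ derived in Section~\ref{heuristic}. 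For small $\lambda$ one has $\mathrm{sgn}(1-\pi_-)=\mathrm{sgn}(1-\pi_{\max})$, so $a^2>b^2$ (tanh) holds iff $(\gamma-1)(\pi_{\max}-1)>0$, which is precisely the first case in the lemma.

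\textbf{Step 4: monotonicity.} Evaluating the right-hand side of \eqref{w000} at $x=0$ and using the same identity $C=-\gamma\pi_-^2+\tfrac{2\mu}{\sigma^2}\pi_-$ yields, after cancellation,
$$w'(0)\;=\;\pi_-\,(1-\pi_-).$$
Since the ODE is autonomous, the sign of $w'$ can change only at a root of the quadratic right-hand side; for $\lambda$ small enough the trajectory starting at $w(0)=\pi_-$ stays bounded away from both roots on the local interval under consideration, so the sign of $w'$ is constant and equals that of $w'(0)$. This gives the increasing behaviour when $\pi_{\max}<1$ (so $\pi_-<1$) and the decreasing behaviour when $\pi_{\max}>1$ (so $\pi_->1$ for small $\lambda$).

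The main obstacle is purely bookkeeping: tracking signs and branches of the inverse hyperbolic/trigonometric functions in Step~3 so that the constant $c$ always comes out in the principal branch, and verifying that the case boundaries coincide exactly with those in the statement. Everything else is standard Riccati manipulation and an elementary uniqueness argument.
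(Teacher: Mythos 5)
Your proposal is correct in substance, but it takes a genuinely more constructive route than the paper: the paper's proof of this lemma is a pure verification (differentiate the stated closed forms, check the ODE and initial condition, and read off monotonicity ``by inspection''), whereas you derive the formulas, reducing to the canonical equation $\tilde w'=\tilde w^2-D$ via $\tilde w=(\gamma-1)w-q$, integrating by separation of variables, and invoking Picard--Lindel\"of to identify the result with the solution of \eqref{w000}. This buys an explanation of \emph{why} the three case distinctions occur: your identity $a^2-b^2=(\gamma-1)\pi_-(\pi_--1)$ (equivalently $D=b^2+(\gamma-1)\pi_-(\pi_--1)$, with $\pi_-=(1-\lambda)\pi_{\max}$) correctly sorts $\tanh$ versus $\coth$, and your quadratic computation for $D(\lambda)<0$ at $\lambda=0$ does reproduce exactly the printed interval for the $\tan$ branch. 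Your monotonicity argument is also different from the paper's: instead of inspecting the explicit formulas you evaluate the autonomous right-hand side at the initial point, recovering $w'(0)=\pi_-(1-\pi_-)$ (which is precisely \eqref{wat0}), and use uniqueness for the autonomous scalar ODE to propagate the sign; this is clean and arguably more robust than branch-by-branch inspection. One caveat to fix before this stands as a proof: in Step~2 your stated general solutions of $\tilde w'=\tilde w^2-a^2$ carry the wrong sign -- $-a\tanh(c-ax)$ and $-a\coth(c-ax)$ solve $\tilde w'=a^2-\tilde w^2$, while the correct solutions (and the ones matching the lemma after undoing the substitution) are $+a\tanh(c-ax)$ and $+a\coth(c-ax)$ with $c=\tanh^{-1}(b/a)$, resp.\ $\coth^{-1}(b/a)$; the parenthetical ``absorbing a sign into $a$'' does not repair this, since it also flips the argument of the inverse function. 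This is a bookkeeping slip rather than a gap in the method, but as written the intermediate formulas do not solve the equation you set up, so the signs must be corrected for the chain substitution--integration--uniqueness to close.
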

\begin{proof}
The first part of the assertion is readily verified by taking derivatives. The second follows by inspection of the explicit formulae.
\end{proof}

 Next, we establish that the crucial constant $\lambda$, which determines both the no-trade region and the equivalent safe rate, is well-defined.
\begin{mylemma}\label{uniquelambda}
Let $0 < \pi_{\max}=\kappa \pi_* \not= 1$, define $w\left(\lambda,\cdot\right)$  as in Lemma~\ref{solw}, and set
\begin{equation}\label{deful}
l(\lambda) = \frac{(1-\lambda)\pi_{\max}}{1-(1-\lambda)\pi_{\max}} \quad \text{and} \quad  u=\frac{\pi_{\max}}{1-\pi_{\max}}.\end{equation}
Then, for a sufficiently small $\epsilon$, there exists a unique solution $\lambda$ of 
\begin{equation}\label{B2}
w\left(\lambda, \log{\left(u/l(\lambda)\right)}\right)=\frac{\pi_{\max}(1-\epsilon)}{(1-\pi_{\max})+\pi_{\max}(1-\epsilon)}=:w_+.
\end{equation}
As $\epsilon \downarrow 0$, it has the asymptotic expansion
\begin{displaymath}
\lambda =
\left(\frac{1}{\gamma}\frac{\kappa}{1-\kappa}\frac{(1-\pi_{\max})^2}{\pi_{\max}}\right)^{1/2}\epsilon^{1/2}+ \mathcal{O}(\epsilon).
\end{displaymath}
\end{mylemma}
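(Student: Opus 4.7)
I would reformulate \eqref{B2} as an implicit equation for $\lambda$ in terms of $\epsilon$ and resolve the failure of the standard implicit function theorem (IFT) by rescaling. Set
$$F(\lambda,\epsilon) := w\bigl(\lambda, X(\lambda)\bigr) - w_+(\epsilon), \qquad X(\lambda) := \log\bigl(u/l(\lambda)\bigr),$$
a function smooth in a neighborhood of $(0,0)$ by Lemma~\ref{solw} and \eqref{deful}. Since $l(0)=u$ one has $X(0)=0$, and combined with $w(0,0)=\pi_{\max}=w_+(0)$ this gives $F(0,0)=0$. The standard IFT does not apply at $(0,0)$ because $\partial_\lambda F(0,0)=0$, and this degeneracy is precisely what forces the half-power scaling in $\epsilon$.

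The heart of the proof is a careful Taylor expansion of $F$ to order $\lambda^2$. From \eqref{deful} a short computation gives $X(\lambda)=\lambda/(1-\pi_{\max})+O(\lambda^2)$, while \eqref{w0x} gives $w(\lambda,0)=\pi_{\max}-\pi_{\max}\lambda$. Evaluating the Riccati ODE \eqref{Riccati} at $x=0$ and using $\pi_{\max}=\kappa\mu/(\gamma\sigma^2)$ yields $\partial_x w(0,0)=\pi_{\max}(1-\pi_{\max})$, so the first-order-in-$\lambda$ terms in the composition $w(\lambda,X(\lambda))$ cancel exactly. Pushing the expansion one order further -- using $\partial_x^2 w(0,0)$ obtained by differentiating \eqref{Riccati} in $x$, the $\lambda$-dependence of the inhomogeneous term $\tfrac{\mu^2}{\gamma\sigma^4}\bigl(1-(1-\kappa(1-\lambda))^2\bigr)$, and the expansion $w_+(\epsilon)=\pi_{\max}-\pi_{\max}(1-\pi_{\max})\epsilon+O(\epsilon^2)$ -- produces
$$F(\lambda,\epsilon) \;=\; A\lambda^2 + B\epsilon + O(\lambda^3) + O(\lambda\epsilon) + O(\epsilon^2),$$
with $B=\pi_{\max}(1-\pi_{\max})$ and $A=-\gamma(1-\kappa)\pi_{\max}^2/\bigl[\kappa(1-\pi_{\max})\bigr]$. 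The assumption $\pi_{\max}\neq 1$ ensures $A\neq 0$, and the ratio $-B/A = \kappa(1-\pi_{\max})^2/\bigl[\gamma(1-\kappa)\pi_{\max}\bigr]>0$ in both the unlevered ($\pi_{\max}<1$) and the levered ($\pi_{\max}>1$) regime.

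To extract the asymptotic expansion rigorously, I would rescale: set $\rho:=\sqrt{\epsilon}$ and $\eta:=\lambda/\rho$, and define
$$G(\eta,\rho) := \rho^{-2}\,F(\eta\rho, \rho^2).$$
The expansion above shows $G$ extends smoothly to $\rho=0$ with $G(\eta,0)=A\eta^2+B$. Let $\eta_0:=\sqrt{-B/A}$, the unique positive root of $G(\cdot,0)$. Then $\partial_\eta G(\eta_0,0)=2A\eta_0\neq 0$, so the classical IFT applied to $G$ at $(\eta_0,0)$ yields a unique $C^1$ branch $\eta(\rho)$ near $\eta_0$ satisfying $G(\eta(\rho),\rho)=0$ with $\eta(0)=\eta_0$. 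Translating back, $\lambda(\epsilon)=\eta_0\sqrt{\epsilon}+O(\epsilon)$, which is the claimed expansion. Uniqueness of $\lambda>0$ solving \eqref{B2} for all sufficiently small $\epsilon$ follows from the fact that the leading quadratic $A\lambda^2+B\epsilon$ is strictly monotone in $|\lambda|$, so $F(\cdot,\epsilon)$ has a unique small positive zero.

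The main obstacle is the identification of the coefficient $A$. Three distinct sources contribute to the $\lambda^2$ term in $w(\lambda,X(\lambda))$ -- the quadratic correction to $X(\lambda)$, the linear $\lambda$-dependence of $\partial_x w(\lambda,0)$ via the inhomogeneous term of \eqref{Riccati}, and the term $\tfrac{1}{2}\partial_x^2 w(0,0)X(\lambda)^2$ -- and two of these partially cancel before the remaining contribution fixes $A$. In principle one could bypass this bookkeeping by specializing the explicit formulas of Lemma~\ref{solw}, but the functional branch depends on the regime, so working directly from the Riccati ODE is uniform and cleaner.
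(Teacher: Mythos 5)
Your proposal is correct: I checked the expansion and your coefficients do come out right ($\partial_x w(0,0)=\pi_{\max}(1-\pi_{\max})$, the first-order cancellation, $B=\pi_{\max}(1-\pi_{\max})$, and $A=-\gamma(1-\kappa)\pi_{\max}^2/[\kappa(1-\pi_{\max})]$, so $\sqrt{-B/A}$ reproduces the stated leading coefficient in both the unlevered and levered regimes). The paper itself gives no argument here -- it simply invokes \cite[Lemma B.2]{gerhold.al.11} ``with minor modifications'' -- and that reference proceeds by the same underlying idea you use, namely a degenerate implicit equation resolved at the scale $\epsilon^{1/2}$ via the implicit function theorem. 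Where you genuinely differ is in the execution: the cited argument works off the closed-form expression for $w$ from Lemma~\ref{solw}, which forces a case distinction over the $\tanh$/$\tan$/$\coth$ branches, whereas you extract the second-order Taylor data directly from the Riccati ODE \eqref{Riccati} and its initial condition \eqref{w0x}, together with the expansion of $\log(u/l(\lambda))$ from \eqref{deful}. This buys uniformity across the three branches and avoids the explicit formulas entirely; the price is the bookkeeping of the three $\lambda^2$-contributions, which you handle correctly. One small point to tighten: the final uniqueness sentence is compressed -- monotonicity of the leading quadratic $A\lambda^2+B\epsilon$ by itself does not exclude other small zeros of the full $F(\cdot,\epsilon)$. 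The clean way, consistent with your rescaling, is to first note that any zero $\lambda\in(0,\delta]$ of $F(\cdot,\epsilon)$ must satisfy $c\sqrt{\epsilon}\le\lambda\le C\sqrt{\epsilon}$ (from the expansion with uniform remainders), and then conclude uniqueness inside this window from $\partial_\eta G\approx 2A\eta\neq 0$ there, i.e.\ from the nondegenerate IFT for $G$. With that one-line repair the argument is complete and matches the statement of the lemma, including the requirement $\lambda>0$.
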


\begin{proof} 
With minor modifications, this follows as in \cite[Lemma B.2]{gerhold.al.11}
\end{proof} 

Henceforth, $\lambda$ denotes the quantity from Lemma~\ref{uniquelambda}, and we omit the $\lambda$-dependence of $a=a(\lambda),b=b(\lambda)$, $l=l(\lambda)$, and $w(x)=w(\lambda,x)$. 

\begin{mycor}\label{wprimecor}
Let $0 < \pi_{\max} \not= 1$ and suppose $\epsilon$ is sufficiently small. Then, in all three cases,
\begin{align}
w'(0) &= \pi_{-}(1-\pi_{-})\label{wat0}\\
w'\left(\log{(u/l)}\right) &\leq  w_+ (1-w_+).\label{watlogul}
\end{align}
\end{mycor}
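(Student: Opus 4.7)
The plan is to reduce both claims to algebraic identities obtained by inserting the boundary values into the Riccati ODE. Solving \eqref{Riccati} for $w'$ yields
\begin{equation*}
w'(x) = -(1-\gamma)w(x)^2 - \left(\tfrac{2\mu}{\sigma^2}-1\right)w(x) + \tfrac{\mu^2}{\gamma\sigma^4}\bigl(1-(1-\kappa(1-\lambda))^2\bigr).
\end{equation*}
By construction (cf.~\eqref{piequation}), $\pi_{-} = (1-\lambda)\pi_{\max}$ is the smaller root of $\gamma\sigma^2\pi^2/2 - \mu\pi + \beta = 0$ with $\beta = \tfrac{\mu^2}{2\gamma\sigma^2}(1-(1-\kappa(1-\lambda))^2)$, so the constant term above equals $2\mu\pi_{-}/\sigma^2 - \gamma\pi_{-}^2$. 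Writing the larger root as $\pi_{+} = 2\pi_{*} - \pi_{-}$ (Vieta, with $\pi_{*} = \mu/\gamma\sigma^2$) and rearranging, the ODE collapses to the compact identity
\begin{equation*}
w'(x) = w(x)\bigl(1-w(x)\bigr) + \gamma\bigl(w(x) - \pi_{-}\bigr)\bigl(w(x) - \pi_{+}\bigr).
\end{equation*}

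Identity \eqref{wat0} is then immediate: at $x = 0$, the value $w(0) = \pi_{-}$ annihilates the second summand, leaving $w'(0) = \pi_{-}(1-\pi_{-})$. For \eqref{watlogul}, substituting $w(\log(u/l)) = w_{+}$ reduces the claim to
\begin{equation*}
\gamma(w_{+}-\pi_{-})(w_{+}-\pi_{+}) \leq 0,
\end{equation*}
i.e., to the sandwiching $\pi_{-} \leq w_{+} \leq \pi_{+}$.

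The main work is therefore to verify this double inequality for small $\epsilon$. For the lower bound I would invoke the monotonicity statement in Lemma~\ref{solw} together with the sign of $\log(u/l)$ read off from \eqref{logul}: in the no-leverage case $0 < \pi_{\max} < 1$, $w(\lambda,\cdot)$ is increasing and $\log(u/l) > 0$; in the leverage case $\pi_{\max} > 1$, $w(\lambda,\cdot)$ is decreasing and $\log(u/l) < 0$. In either regime, $w_{+} = w(\log(u/l)) \geq w(0) = \pi_{-}$. For the upper bound, the binding constraint $\pi_{\max} < \pi_{*}$ forces $\pi_{+} = 2\pi_{*}-\pi_{-} > \pi_{*}$, whereas $w_{+} \to \pi_{\max} < \pi_{*}$ as $\epsilon \downarrow 0$, so $\pi_{+} - w_{+}$ is bounded below by a positive constant for all sufficiently small $\epsilon$. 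The main subtlety is treating the leverage and no-leverage regimes uniformly, since $\log(u/l)$ changes sign and $w_{+}$ crosses $\pi_{\max}$ in opposite directions; but the monotonicity encoded in Lemma~\ref{solw} is exactly aligned with the sign of $\log(u/l)$ to make the sandwiching valid in both cases.
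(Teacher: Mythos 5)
Your proof is correct and takes essentially the same route as the paper's one-line argument: both assertions are read off by evaluating the Riccati ODE at the boundary values $w(0)=\pi_-$ and $w(\log(u/l))=w_+$, with the inequality reduced (via your factorization) to $\pi_-\le w_+\le 2\pi_*-\pi_-$, which follows from the monotonicity in Lemma~\ref{solw} and the binding constraint $\pi_{\max}<\pi_*$ for small $\epsilon$. One caveat: your symbol $\pi_+=2\pi_*-\pi_-$ denotes the larger quadratic root, not the paper's selling boundary $\pi_+=\pi_{\max}$ from Theorem~\ref{main result}, so a different letter would avoid a notational clash.
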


\begin{proof}
The assertions follow from the ODE for $w$ and its boundary conditions in Lemma~\ref{solw} and Lemma~\ref{uniquelambda}.
\end{proof}

\subsection{Shadow Prices and Verification}

A key idea for the proof of our verification theorem is to replace the original bid and ask prices by a single fictitious ``shadow price'' $\tilde{S}$ evolving within the bid-ask spread, which admits an optimal policy that is feasible (and hence also optimal) in the original market with transaction costs, too. An approach of this kind was first used in \cite{kallsen.muhlekarbe.10}, and has been utilized in the present setting modulo constraints by \cite{gerhold.al.11}.

\begin{mydef}\label{shadowpricedef}
A \emph{shadow price} is a process $\tilde S$ lying within the bid-ask spread $[(1-\epsilon)S, S]$, such that there exists a corresponding long-run optimal strategy $(\varphi^0,\varphi)$ of finite variation that satisfies the portfolio constraint \eqref{leverage constraint} and only entails buying (resp.\ selling) the risky asset when $\tilde{S}$ equals the buying (resp.\ selling) price.
\end{mydef}

The original constraints can be translated 
as follows:

\begin{myrek}\label{rek:shadowconstraint}
Let $\tilde{S}$ be a price process evolving within the bid ask spread $[(1-\epsilon)S,S]$. Then if a strategy $(\phi^0,\phi)$ satisfies the original portfolio constraint $\pi_t \leq \pi_{\max}$ on the risky weight computed with the ask price $S$,  it also satisfies the following constraint on the risky weight computed with $\tilde{S}$:
\begin{equation}\label{eq:shadowconstraint}
\tilde{\pi}_t \leq \tilde{\pi}_{\max}:=\begin{cases} \pi_{\max}, &\mbox{if } \pi_{\max} \leq 1,\\ \frac{\pi_{\max}(1-\epsilon)}{(1-\pi_{\max})+\pi_{\max}(1-\epsilon)}, & \mbox{otherwise}.\end{cases}
\end{equation}
\end{myrek}

  The construction successfully used in \cite{loewenstein.00,benedetti.al.11,herzegh.prokaj.11} suggests that the discounted shadow price can be constructed as the \emph{marginal rate of substitution} of risky for safe assets for the optimal investor, i.e., as the ratio of the partial derivatives of the value function with respect to the numbers of shares in the risky and safe asset, respectively:

$$\frac{\tilde{S}_t}{S^0_t}=\frac{\partial_{\varphi_t} V(t,X^0_t,X_t)}{\partial_{\varphi^0_t} V(t,X^0_t,X_t)}.$$
With the candidate value function derived in the above section, this leads to the candidate shadow price
\begin{equation}\label{eq:cand}
\tilde S_t = S_t\frac{w\left(\log{(X_t/X_t^0 l)}\right)}{X_t/X_t^0[1-w\left(\log{(X_t/(X_t^0 l))}\right)]} = S_t\frac{w\left(Y_t\right)}{l e^{Y_t}(1-w\left(Y_t\right))},
\end{equation}
where $e^{Y_t}= (X_t/X_t^0 l)$ is the ratio between the risky and safe positions at the ask price $S_t$, centered at the buying boundary $l= \frac{(1-\lambda)\pi_{\max}}{1-(1-\lambda)\pi_{\max}}$. In view of the above heuristics, the stock-cash ratio $X/X^0$ should remain within the no-trade region $[l,u]$; consequently, $Y$ should take values in $[0,\log(u/l)]$ (resp. $[\log{(u/l)},0]$, if $\pi_{\max} >1$). In the interior of this interval, the number of risky assets should remain constant, so that the dynamics of $Y = \log{\varphi/(l\varphi^0)}+\log{S/S^0}$ coincide with those of the Brownian motion $\log{S/S^0}$, which needs to be reflected at the boundaries to remain in $[0,\log(u/l)]$.

These heuristic arguments motivate to \emph{define} the process $Y$ as Brownian motion with instantaneous reflection at $0$ and $\log(u/l)$: 
\begin{equation}\label{dynamicsy}
d Y_t = (\mu-\sigma^2/2)d t + \sigma d W_t + d L_t - d U_t,
\end{equation}
where the local time processes $L$ and $U$ are adapted, continuous, non-decreasing (resp. non-increasing, if $\pi_{\max} >1$) and only increase (resp. decrease, if $\pi_{\max}>1$) on the sets $\{ Y_t = 0\}$ and $\{ Y_t = \log{(u/l)}\}$, respectively. The process $\tilde{S}$ can then be defined in accordance with \eqref{eq:cand}:

\begin{mylemma}\label{shatl}
Define 
\begin{equation}\label{y0}
y = \begin{cases}
0,& \text{if } l\xi^0 S_0^0 \geq \xi S_0,\\
\log{(u/l)},& \text{if } u\xi^0 S_0^0 \leq \xi S_0,\\
\log{\left[\xi S_0/(\xi^0 S_0^0 l)\right]},& \text{otherwise}.
\end{cases}
\end{equation}
and let $Y$ be defined as in~\eqref{dynamicsy}, starting at $Y_0 = y$. Then, $\tilde S = S \frac{w(Y)}{l e^{Y} (1-w(Y))}$, with $w$ as in Lemma~\ref{solw}, has the dynamics
\begin{displaymath}
\frac{d\tilde S (Y_t)}{\tilde S (Y_t)} = \left(\tilde{\mu}(Y_t)+r\right)d t+ \tilde{\sigma}(Y_t)d W_t + \left(1-\frac{w'}{(1-w)w}\left(\log{\left(\frac{u}{l}\right)}\right)\right)d U_t,
\end{displaymath}
where $\tilde \mu(\cdot)$ and $\tilde \sigma (\cdot)$ are given by
\begin{displaymath}
\tilde{\mu}(y) = \frac{\sigma^2w'(y)}{w(y)(1-w(y))}\left(\frac{w'(y)}{1-w(y)}-(1-\gamma)w(y)\right), \quad \tilde{\sigma}(y) = \frac{\sigma w'\left(y\right)}{w(y)(1-w(y))}.
\end{displaymath}
Moreover, the process $\tilde S$ takes values within the bid-ask spread $[(1-\epsilon)S,S]$.
\end{mylemma}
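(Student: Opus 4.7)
The plan is to set $\tilde S_t = S_t\,g(Y_t)$ with $g(y) := w(y)/[l e^y (1-w(y))]$ and apply It\^o's formula using the dynamics \eqref{dynamicsy} of $Y$. Logarithmic differentiation gives the compact identity $g'(y)/g(y) = w'(y)/[w(y)(1-w(y))] - 1$, so the diffusion coefficient $\sigma(1 + g'(y)/g(y)) = \sigma w'(y)/[w(y)(1-w(y))]$ already matches $\tilde\sigma(y)$ on the nose. Writing $g''/g = (g'/g)^2 + (g'/g)'$ and substituting the Riccati ODE of Lemma~\ref{solw} — both directly and, in differentiated form, via $w'' = -w'[2(1-\gamma)w + 2\mu/\sigma^2 - 1]$ — one reduces the drift $(\mu+r) + (g'/g)(\mu+\sigma^2/2) + (\sigma^2/2)(g''/g)$ to $r + \tilde\mu(y)$, the inhomogeneous constant $\frac{\mu^2}{\gamma\sigma^4}(1-(1-\kappa(1-\lambda))^2)$ from the Riccati cancelling cleanly against the $\mu$- and $\sigma^2$-terms. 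This is the longest calculation of the proof.

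Next, for the reflection terms. The coefficient of $dL_t$ equals $g'(0)/g(0) = w'(0)/[w(0)(1-w(0))] - 1$, which vanishes by Corollary~\ref{wprimecor} (since $w'(0) = \pi_-(1-\pi_-) = w(0)(1-w(0))$); hence $L$ does not appear in the dynamics of $\tilde S$. At the opposite boundary, the coefficient of $-dU_t$ is $g'(\log(u/l))/g(\log(u/l)) = w'(\log(u/l))/[w_+(1-w_+)] - 1$, which yields precisely the stated singular term $(1 - w'/[(1-w)w](\log(u/l)))\,dU_t$.

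Finally, for $\tilde S \in [(1-\epsilon)S,S]$, i.e.\ $g(Y) \in [1-\epsilon,1]$, a direct calculation at the endpoints using the definitions of $l$, $u$, $w(0)$ and $w_+$ yields $g(0)=1$ and $g(\log(u/l))=1-\epsilon$. To conclude that $g$ remains sandwiched in between, I would show that $g'/g = w'/[w(1-w)] - 1$ keeps a definite sign on the no-trade interval. Setting $f(y) := w'(y) - w(y)(1-w(y))$ and differentiating (using the Riccati ODE) gives $f'(y) = -2w'(y)[\mu/\sigma^2 - \gamma w(y)]$, whose factor $\mu/\sigma^2 - \gamma w$ keeps a fixed sign because $w$ stays strictly below $\pi_*=\mu/(\gamma\sigma^2)$ thanks to $\kappa<1$; combined with $f(0)=0$ and the monotonicity of $w$ from Lemma~\ref{solw}, this forces $g$ to decrease monotonically from $1$ to $1-\epsilon$ on the entire no-trade interval. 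The leverage case $\pi_{\max}>1$ is handled identically with all monotonicities and sign conventions reversed. The main obstacle is the algebraic collapse of the drift in the first step; the remaining bookkeeping is routine.
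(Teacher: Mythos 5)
Correct, and essentially the paper's own proof: It\^o's formula combined with the differentiated Riccati identity $w''=-2(1-\gamma)w'w-(2\mu/\sigma^2-1)w'$ for the drift and diffusion, the boundary identity $w'(0)=w(0)(1-w(0))$ to eliminate the $dL$-term and identify the $dU$-coefficient, and monotonicity of $y\mapsto w(y)/\bigl(le^{y}(1-w(y))\bigr)$ between its endpoint values $1$ and $1-\epsilon$ to place $\tilde S$ in the bid-ask spread. Your global sign argument for $f=w'-w(1-w)$, via $f(0)=0$, $f'=-2w'(\mu/\sigma^2-\gamma w)$ and $w<\pi_*$ on the no-trade interval (which in the leverage case uses that $\epsilon$ is small, as assumed), simply makes explicit the ``comparison argument'' that the paper states tersely after checking the sign of $w''(0)-w'(0)+2w(0)w'(0)$.
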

Note that the first two cases in~\eqref{y0} arise if the initial stock-cash ratio $\xi S_0/(\xi^0 S_0^0)$ lies outside of the interval $[l,u]$. Then, a jump from the initial position $(\varphi_{0^-}^0, \varphi_{0^-}) = (\xi^0,\xi)$ to the nearest boundary value of $[l,u]$ is required. This transfer necessitates the purchase resp. sale of the risky asset and hence the initial price $\tilde S _0$ is defined to match the buying resp.\ selling price of the risky asset.

\begin{proof}
The dynamics of $\tilde S$ result from It\^{o}'s formula, the dynamics of $Y$, and the identity
\begin{equation}\label{w2ableitung}
w''(y) = -2(1-\gamma)w'(y) w(y)- (2\mu/\sigma^2-1) w'(y),
\end{equation}
which is a direct consequence of Lemma~\ref{solw}. In addition, the boundary conditions for $w$ and $w'$ imply that
\begin{align*}
w''(0)-w'(0)+2w(0)w'(0) &= 2 w'(0) (\gamma w(0)-\frac{\mu}{\sigma^2})\\
&= 2\pi_{\max} (1-\lambda) (1-\pi_{\max} (1-\lambda))\frac{\mu}{\sigma^2}(\kappa (1-\lambda)-1)
\end{align*}
is negative (resp. positive, if $\pi_{\max}>1$). Thus, a comparison argument yields that the derivative of the function $\eta: y\mapsto \frac{w(y)}{l e^y (1-w(y))}$ is negative (resp. positive, if $\pi_{\max} >1$). Taking into account $\frac{w(0)}{l(1-w(0))}= 1$ and $\frac{w(\log{(u/l)})}{u(1-w(\log{(u/l)}))}=1-\epsilon$ completes the proof.
\end{proof}

Unlike in the absence of constraints, the dynamics of $\tilde S$ involve a singular part, such that the shadow market is no longer arbitrage-free. Indeed, whenever $\tilde{S}$ hits the lower bid price $(1-\varepsilon)S$ it is reflected upwards, such that one can make a riskless profit by buying risky assets and immediately selling them after the reflection has taken place. 

Due to the constraints \eqref{leverage constraint}, however, such arbitrage opportunities cannot be scaled arbitrarily. Consequently, there exists a discount factor that turns all admissible wealth processes into supermartingales (compare \cite{cvitanic.karatzas.93}):
 
\begin{mylemma}\label{discount}
For a fixed time horizon $T$, denote by $\tilde X_T^\psi$ the shadow payoff of an admissible strategy $(\psi^0,\psi)$ in the frictionless shadow market $(S^0,\tilde S)$, satisfying the constraint \eqref{leverage constraint} and hence also the shadow constraint $\tilde{\pi}^\psi \leq \tilde{\pi}_{\max}$ from Remark \ref{rek:shadowconstraint}.  Define the process $\tilde M$ by 
\begin{displaymath}
\tilde M_t :=
e^{-r t}\mathcal{E} \Big(-\int_0^\cdot \tfrac{\tilde \mu(Y_u)}{\tilde \sigma(Y_u)} d W_u\Big)_t e^{-\tilde{\pi}_{\max}\left(1-\frac{w'(\log(u/l))}{(1-w(\log(u/l)))w(\log(u/l))}\right)U_t},
\end{displaymath} 
with the local time process $U$ from~\eqref{dynamicsy}. Then $\tilde{M}$ is a discount factor:
\begin{displaymath}
\mathbb{E}[\tilde X_T^\psi \tilde M _T] \leq \tilde X _0^\psi.
\end{displaymath}
\end{mylemma}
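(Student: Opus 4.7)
The plan is to apply integration by parts to the product $\tilde X^\psi_t \tilde M_t$, verify that the resulting drift vanishes and that the singular part is pathwise non-positive, and then conclude by the standard nonnegative local supermartingale argument. The discount factor $\tilde M$ has been designed precisely so that its continuous martingale part compensates the instantaneous ``Sharpe ratio'' $\tilde\mu/\tilde\sigma$ of the shadow price, while its singular part penalizes leverage at exactly the rate $\tilde\pi_{\max}c\,dU_t$ appearing at the upper constraint, where $c := 1 - w'(\log(u/l))/[w(\log(u/l))(1-w(\log(u/l)))]$.

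Concretely, for any admissible shadow strategy $(\psi^0,\psi)$ with risky weight $\tilde\pi^\psi_t := \psi_t\tilde S_t/\tilde X^\psi_t$, the frictionless self-financing identity combined with the dynamics of $\tilde S$ from Lemma~\ref{shatl} give
\begin{displaymath}
\frac{d\tilde X^\psi_t}{\tilde X^\psi_t} = \bigl(r + \tilde\pi^\psi_t\tilde\mu(Y_t)\bigr)\,dt + \tilde\pi^\psi_t\tilde\sigma(Y_t)\,dW_t + c\,\tilde\pi^\psi_t\,dU_t,
\end{displaymath}
while Itô's formula applied to the definition of $\tilde M$ yields
\begin{displaymath}
\frac{d\tilde M_t}{\tilde M_t} = -r\,dt - \frac{\tilde\mu(Y_t)}{\tilde\sigma(Y_t)}\,dW_t - \tilde\pi_{\max} c\,dU_t.
\end{displaymath}
The quadratic covariation of the continuous martingale parts equals $-\tilde X^\psi_t\tilde M_t\tilde\pi^\psi_t\tilde\mu(Y_t)\,dt$, so after the two $r\,dt$ terms cancel and the drift $\tilde\pi^\psi_t\tilde\mu(Y_t)\,dt$ is absorbed by the covariation, integration by parts produces
\begin{displaymath}
\frac{d(\tilde X^\psi_t\tilde M_t)}{\tilde X^\psi_t\tilde M_t} = \left(\tilde\pi^\psi_t\tilde\sigma(Y_t) - \frac{\tilde\mu(Y_t)}{\tilde\sigma(Y_t)}\right)dW_t + c\bigl(\tilde\pi^\psi_t - \tilde\pi_{\max}\bigr)\,dU_t.
\end{displaymath}

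The main technical step, and the only real obstacle, is verifying that the singular term is pathwise non-positive, which requires consistent sign bookkeeping in the two regimes. When $\pi_{\max}<1$, the process $U$ is non-decreasing, Lemma~\ref{solw} and $w_+(1-w_+)>0$ together with Corollary~\ref{wprimecor} yield $c \geq 0$, and Remark~\ref{rek:shadowconstraint} gives $\tilde\pi^\psi_t \leq \tilde\pi_{\max}$, so $c(\tilde\pi^\psi_t - \tilde\pi_{\max})\,dU_t \leq 0$. When $\pi_{\max}>1$ all three signs flip consistently: $U$ is non-increasing, $w$ is decreasing and $w_+ > 1$ so that $w_+(1-w_+) < 0$; dividing the estimate of Corollary~\ref{wprimecor} by this negative quantity reverses the inequality and produces $c \leq 0$, while the shadow constraint still gives $\tilde\pi^\psi_t - \tilde\pi_{\max} \leq 0$, and the product of these three nonpositive factors with $dU_t \leq 0$ is again $\leq 0$. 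Consequently $\tilde X^\psi\tilde M$ is a nonnegative local supermartingale, hence a genuine supermartingale by Fatou's lemma, and evaluating at the horizon gives
\begin{displaymath}
\mathbb{E}\bigl[\tilde X^\psi_T\tilde M_T\bigr] \leq \tilde X^\psi_0 \tilde M_0 = \tilde X^\psi_0,
\end{displaymath}
as claimed.
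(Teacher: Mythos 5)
Your proof is correct and follows essentially the same route as the paper: the paper writes $\tilde X^\psi_T\tilde M_T$ as a product of stochastic exponentials times the singular exponential $e^{\int_0^T(\tilde\pi^\psi_t-\tilde\pi_{\max})(1-\frac{w'}{(1-w)w})dU_t}$, bounds that factor by $1$ using the shadow constraint and Corollary~\ref{wprimecor} (with the same sign flips in the leverage case), and concludes via the supermartingale property of the remaining positive local martingale $\mathcal{E}(\int_0^\cdot(\tilde\pi^\psi\tilde\sigma-\tilde\mu/\tilde\sigma)\,dW_t)$. Your integration-by-parts presentation is just an equivalent reorganization of that computation.
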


\begin{proof}
First, notice that $\tilde \mu, \tilde \sigma$ and $w$ are functions of $Y$, but the argument is omitted throughout to ease notation. Inserting the dynamics of $\tilde S$ yields
\begin{displaymath}
\tilde X _T^\psi = \tilde X _0^\psi \mathcal{E}\left(\int_0^\cdot \left(r+ \tilde \pi_t^{\psi} \tilde \mu\right) d t + \int_0^\cdot \tilde \pi_t^{\psi} \tilde \sigma d W_t\right)_T e^{\int_0^T\tilde \pi_t^{\psi} \left(1-\frac{w'}{(1-w)w}\right)dU_t},
\end{displaymath}
where $\tilde \pi^\psi$ denotes the risky weight in terms of $\tilde S$. In view of Remark \ref{rek:shadowconstraint}, we have $\tilde \pi^{\psi} \leq \tilde{\pi}_{\max}$. Furthermore, the identity~\eqref{watlogul} implies $(1-\frac{w'}{(1-w)w})\geq 0$ (resp. $\leq 0$, if $\pi_{\max}>1$) on the set $\{y=\log(u/l)\}$ where $U$ increases (resp. decreases, if $\pi_{\max}>1$). Hence,
\begin{align*}
\mathbb{E}[\tilde X_T^{\psi} \tilde M_T] = & \mathbb{E} \Bigl[\tilde X _0^\psi\mathcal{E}\Big(\int_0^\cdot (r+ \tilde \pi^{\psi}_t \tilde \mu) d t + \int_0^\cdot \tilde \pi^{\psi}_t \tilde \sigma d W_t\Big)_T\\
&  \qquad \times \mathcal{E}\Big(\int_0^\cdot - r d t - \int_0^\cdot \frac{\tilde \mu}{\tilde \sigma} d W_t\Big)_T e^{\int_0^T (\tilde{\pi}^\psi_t-\tilde{\pi}_{\max}) (1-\frac{w'}{(1-w)w})dU_t}\Bigr]\\
\leq& \mathbb{E}\left[\tilde X _0^\psi\mathcal{E}\left(\int_0^\cdot \left(\tilde \pi^{\psi} \tilde \sigma -\frac{\tilde \mu}{\tilde \sigma}\right)d W_t\right)_T\right] \leq \tilde X _0^\psi,\nonumber
\end{align*}
where we have used for the last inequality that the positive local martingale $\mathcal{E}(\int_0^\cdot (\tilde \pi^{\psi} \tilde \sigma -\frac{\tilde \mu}{\tilde \sigma})d W_t)$ is a supermartingale.
\end{proof}

If the candidate process $\tilde S$ from Lemma \ref{shatl} is indeed a shadow price, then the optimal numbers of safe and risky assets should be the same as in the original market with transaction costs. By definition of $\tilde{S}$, the candidate optimal strategy derived heuristically in Section~\ref{heuristic} therefore leads to the following candidate for the long-run optimal risky weight in the shadow market:
\begin{displaymath}
\tilde \pi (Y_t) = \frac{\varphi_t \tilde S _t }{\varphi_t^0 S_t^0+ \varphi_t \tilde S _t}
 = \frac{\varphi_t S _t\frac{w(Y_t)}{l e^{Y_t}(1-w(Y_t))} }{\varphi_t^0 S_t^0+ \varphi_t S _t\frac{w(Y_t)}{l e^{Y_t}(1-w(Y_t))}}
   =\frac{\frac{w(Y_t)}{1-w(Y_t)} }{1+ \frac{w(Y_t)}{1-w(Y_t)}}
   = w(Y_t).
\end{displaymath}

To show that this risky weight is indeed long-run optimal for $\tilde{S}$, we first establish the following finite-horizon bounds in analogy to the frictionless case \cite{guasoni.robertson.12}:

\begin{mylemma}\label{LemmaC2}
For a fixed time horizon $T>0$, let $\beta = \frac{\mu^2}{2\gamma\sigma^2}(1-(1-\kappa(1-\lambda))^2)$ and let the function $w$ be defined as in Lemma~\ref{solw}. Then, for the the shadow payoff $\tilde X_T$  corresponding to the policy $\tilde \pi(Y) = w(Y)$ and the shadow discount factor $\tilde M_T$ introduced in Lemma~\ref{discount}, the following bounds hold true: 
\begin{align}
\mathbb{E} [\tilde X _T^{1-\gamma}] &= \tilde X_0 ^{1-\gamma} e^{(1-\gamma)(r+\beta)T}\hat{\mathbb{E}}[e^{(1-\gamma)\left(\tilde q (Y_T)- \tilde q (Y_0) \right)}],\label{bound1}\\
\mathbb{E}[\tilde M _T^{1-\frac{1}{\gamma}}]^{\gamma} &= e^{(1-\gamma)(r+\beta)T}\hat{\mathbb{E}}[e^{(\frac{1}{\gamma}-1)\left(\tilde q (Y_T)- \tilde q (Y_0) \right)}]^{\gamma}, \label{bound2}
\end{align}
where $\tilde q (y) := \int_0^y (\frac{w'(z)}{1-w(z)}-w(z)) dz$ and $\hat{\mathbb{E}} \left[\cdot\right]$ denotes the expectation with respect to the \emph{myopic probability} $\hat{\mathbb{P}}$, defined by
\begin{displaymath}
\frac{d \hat{\mathbb{P}}}{d \mathbb{P}} = \exp\left(\int_0^T \left(-\frac{\tilde \mu(Y_t)}{\tilde \sigma(Y_t)} + \tilde \sigma(Y_t) \tilde \pi(Y_t)\right) d W_t - \frac{1}{2}\int_{0}^T\left(-\frac{\tilde \mu(Y_t)}{\tilde \sigma(Y_t)}+ \tilde \sigma(Y_t) \tilde \pi(Y_t)\right)^2 d t\right).
\end{displaymath}
\end{mylemma}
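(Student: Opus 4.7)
The plan is to establish both bounds by writing $\tilde X_T$ (resp.\ $\tilde M_T$) in exponential form, changing measure from $\mathbb{P}$ to the myopic probability $\hat{\mathbb{P}}$ via Girsanov, and then collapsing the resulting drift by means of the Riccati ODE~\eqref{Riccati} for $w$ together with the definition of $\beta$. The strategy parallels the frictionless treatment of Guasoni and Robertson~\cite{guasoni.robertson.12}; the only genuine novelty is that the two reflection terms $dL$, $dU$ induced by the portfolio constraint and the singular part of the shadow price will have to be shown to cancel, for which Corollary~\ref{wprimecor} will be crucial.

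For~\eqref{bound1} I first substitute the candidate policy $\tilde{\pi}=w(Y)$ into the exponential representation of the shadow wealth derived inside the proof of Lemma~\ref{discount}, obtaining
\begin{equation*}
\tilde X_T^{1-\gamma} = \tilde X_0^{1-\gamma}\exp\!\Bigl((1-\gamma)\!\int_0^T\!\!\bigl(r+w\tilde\mu-\tfrac{1}{2}w^2\tilde\sigma^2\bigr)dt + (1-\gamma)\!\int_0^T\!\!w\tilde\sigma\,dW_t + (1-\gamma)\!\int_0^T\!\!\bigl(w-\tfrac{w'}{1-w}\bigr)dU_t\Bigr).
\end{equation*}
Under $\hat{\mathbb{P}}$, Girsanov produces the Brownian motion $\hat W_t = W_t - \int_0^t(-\tilde\mu/\tilde\sigma + \tilde\sigma w)\,ds$ and makes the drift of $Y$ equal to $\mu-\sigma^2/2 + (1-\gamma)\sigma^2 w(Y)$. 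Multiplying $\tilde X_T^{1-\gamma}$ by $d\mathbb{P}/d\hat{\mathbb{P}}$ and re-expressing the stochastic integral in terms of $\hat W$ replaces the $dW$ piece by a $\hat{\mathbb{P}}$-martingale and leaves the $dt$-drift $(1-\gamma)r + w\tilde\mu - \tfrac{\gamma}{2}w^2\tilde\sigma^2 - \tfrac{\tilde\mu^2}{2\tilde\sigma^2}$.

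The heart of the argument is the pointwise identity
\begin{equation*}
(1-\gamma)r + w\tilde\mu - \tfrac{\gamma}{2}w^2\tilde\sigma^2 - \tfrac{\tilde\mu^2}{2\tilde\sigma^2} = (1-\gamma)(r+\beta) + (1-\gamma)\tilde q'(y)\bigl(\mu-\tfrac{\sigma^2}{2}+(1-\gamma)\sigma^2 w\bigr) + \tfrac{(1-\gamma)\sigma^2}{2}\tilde q''(y),
\end{equation*}
which I will verify by inserting the expressions for $\tilde\mu,\tilde\sigma$ from Lemma~\ref{shatl} and reducing with~\eqref{Riccati} and the definition $\beta=\tfrac{\mu^2}{2\gamma\sigma^2}(1-(1-\kappa(1-\lambda))^2)$; the right-hand side is precisely the $dt$-part of $(1-\gamma)\,d\tilde q(Y_t)$ under the $\hat{\mathbb{P}}$-dynamics of $Y$. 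The corresponding matching of Brownian coefficients gives $(1-\gamma)w\tilde\sigma - (-\tilde\mu/\tilde\sigma + \tilde\sigma w) = (1-\gamma)\sigma\tilde q'(Y)$, reproducing the $d\hat W$ part of $(1-\gamma)\,d\tilde q(Y)$. The boundary contribution at $y=0$ in the It\^{o} decomposition of $\tilde q(Y)$ vanishes because $\tilde q'(0) = w'(0)/(1-w(0)) - w(0) = \pi_-(1-\pi_-)/(1-\pi_-) - \pi_- = 0$ by~\eqref{wat0}, while the $dU$ coefficient $-\tilde q'(\log(u/l)) = w_+ - w'(\log(u/l))/(1-w_+)$ is precisely the coefficient of $dU$ already present in $\log\tilde X_T$. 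All deterministic, Brownian and boundary contributions therefore telescope to $(1-\gamma)(r+\beta)T + (1-\gamma)(\tilde q(Y_T)-\tilde q(Y_0))$, and taking $\hat{\mathbb{E}}$-expectations yields~\eqref{bound1}.

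The proof of~\eqref{bound2} proceeds analogously, starting from the explicit form of $\tilde M_T$ in Lemma~\ref{discount}, raising to the $(1-1/\gamma)$-th power, applying the Girsanov change to $\hat{\mathbb{P}}$, invoking the corresponding Riccati identity with exponent $1/\gamma-1$ in place of $1-\gamma$, and finally raising the expectation to the $\gamma$-th power. The main obstacle in both cases is the drift-collapse identity above: one must verify that the combination of the rational expressions for $\tilde\mu,\tilde\sigma$ with the non-linear pieces $(1-\gamma)w^2$ and $(w')^2/(1-w)^2$ is reduced by the Riccati ODE to exactly the It\^{o} generator of $(1-\gamma)\tilde q(Y)$. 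Once this identity is in hand, Corollary~\ref{wprimecor} disposes of both boundary local times, and the remaining Girsanov and It\^{o} manipulations are routine.
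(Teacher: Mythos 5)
Your proposal is correct and follows essentially the same route as the paper's proof: write $\tilde X_T^{1-\gamma}$ (resp.\ $\tilde M_T^{1-1/\gamma}$) in exponential form, factor out the myopic density, simplify the coefficients using the expressions for $\tilde\mu,\tilde\sigma$, apply It\^{o}'s formula to $\tilde q(Y)$ so that the stochastic integral and the $dU$-term are absorbed (with the $dL$-term vanishing since $\tilde q'(0)=0$ by~\eqref{wat0}), and collapse the remaining drift to $(1-\gamma)(r+\beta)$ via~\eqref{w2ableitung} and the Riccati ODE — indeed your drift-collapse identity checks out exactly as the paper's computation does. The only cosmetic differences are that you phrase the computation under $\hat{\mathbb{P}}$ via Girsanov whereas the paper keeps everything under $\mathbb{P}$ and substitutes the stochastic integral pathwise, and that only the first identity~\eqref{wat0} of Corollary~\ref{wprimecor} is actually needed here (the $dU$-coefficient matches identically without invoking~\eqref{watlogul}).
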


\begin{proof}
To ease the notation, we again omit the argument $Y$ of the functions $\tilde \mu,\tilde \sigma,\tilde \pi$ and $w$. 
To obtain~\eqref{bound1}, notice that
\begin{displaymath}
\tilde X _T^{1-\gamma} = \tilde X _0^{1-\gamma} e^{(1-\gamma)\int_0^T (r+ \tilde \mu w -\frac{\tilde \sigma^2}{2} w^2)d t+ (1-\gamma)\int_0^T \tilde \sigma w d W_t+ \int_0^T (1-\gamma)(w-\frac{w'}{1-w}) U_t}.
\end{displaymath}
Hence,
\begin{align*}
\tilde X _T^{1-\gamma} =& \tilde X _0^{1-\gamma}\frac{d \hat{\mathbb{P}}}{d \mathbb{P}}  e^{\int_0^T \big[(1-\gamma)(r+ \tilde \mu w -\frac{\tilde \sigma^2}{2}w^2) + \frac{1}{2} (-\frac{\tilde \mu}{\tilde \sigma}+ \tilde \sigma w)^2\big]d t}\nonumber\\
 & \quad  \times e^{\int_0^T \big[(1-\gamma)\tilde \sigma w-(-\frac{\tilde \mu}{\tilde \sigma}+\tilde \sigma w)\big] d W_t+ \int_0^T(1-\gamma)(w-\frac{w'}{1-w})dU_t}.\nonumber
\end{align*}
Inserting the definitions of $\tilde \mu$ and $\tilde \sigma$, the second integrand simplifies to $(1-\gamma)\sigma(\frac{w'}{1-w}-w)$. Similarly, the first integrand reduces to $(1-\gamma) (r+\frac{\sigma^2}{2}(\frac{w'}{1-w})^2-(1-\gamma)\sigma^2w (\frac{w'}{1-w})+(1-\gamma)\frac{\sigma^2}{2}w^2)$. In summary:
\begin{eqnarray}\label{wpart}
\tilde X _T^{1-\gamma} &=& \tilde X _0^{1-\gamma}\frac{d \hat{\mathbb{P}}}{d \mathbb{P}}  e^{(1-\gamma) \int_0^T \left(r+\frac{\sigma^2}{2}\left(\frac{w'}{1-w}\right)^2-(1-\gamma)\sigma^2w\left(\frac{w'}{1-w}\right)+(1-\gamma)\frac{\sigma^2}{2}w^2\right)d t}\nonumber\\
 & & \times e^{(1-\gamma) \left[\int_0^T \left(\frac{w'}{1-w}-w\right)\sigma d W_t - \int_0^T\left(\frac{w'}{1-w}-w\right)dU_t\right]}.
\end{eqnarray}
It\^{o}'s formula and the boundary conditions for $w$ imply that
\begin{eqnarray}
\tilde q (Y_T) -\tilde q (Y_0) &=& \int_0^T \left(\frac{w'}{1-w}-w\right) \sigma d W_t - \int_0^T\left(\frac{w'}{1-w}-w\right)dU_t\nonumber\\
& & +\int_0^T \left(\mu-\frac{\sigma ^2}{2}\right)\left(\frac{w'}{1-w}-w\right)+ \frac{\sigma^2}{2}\left(\frac{w''(1-w)+w'^2}{(1-w)^2}-w'\right)d t.\label{itow}
\end{eqnarray}
Substituting the second derivative $w''$ according to the equation~\eqref{w2ableitung} and using this identity to replace the stochastic integral in~\eqref{wpart} yields 
\begin{displaymath}
\tilde X _T^{1-\gamma} =\tilde X _0^{1-\gamma} \frac{d \hat{\mathbb{P}}}{d \mathbb{P}} e^{(1-\gamma) \int_0^T \left(r+\frac{\sigma^2}{2}w'+(1-\gamma)\frac{\sigma^2}{2}w^2+\left(\mu-\frac{\sigma^2}{2}\right)w\right) d t} e^{(1-\gamma)\left(\tilde q(Y_T)-\tilde q(Y_0)\right)}.
\end{displaymath}
Thus, the first bound results from taking expectations on both sides and using the ODE for $w$ (cf.\ Lemma~\ref{solw}). 

Similarly, plugging in the definitions of $\tilde \mu$ and $\tilde \sigma$, the (shadow) discount factor $\tilde M_T$ and the myopic probability measure $\hat{\mathbb{P}}$ satisfy
\begin{eqnarray}
\tilde M _T^{1-\frac{1}{\gamma}} &=& e^{\frac{1-\gamma}{\gamma}\int_0^T \frac{\tilde \mu}{\tilde \sigma} d W_t + \frac{1-\gamma}{\gamma}\int_0^T \left(r+\frac{\tilde \mu ^2}{2 \tilde \sigma ^2}\right) d t +\frac{1-\gamma}{\gamma}\int_0^T\left(w-\frac{w'}{1-w}\right)dU_t}\nonumber\\
 &=& \frac{d \hat{\mathbb{P}}}{d \mathbb{P}} e^{\frac{1-\gamma}{\gamma}\int_0^T \left( \frac{\tilde \mu}{\tilde \sigma}-\frac{\gamma}{1-\gamma}\left(-\frac{\tilde \mu}{\tilde \sigma}+\tilde \sigma w\right)\right) d W_t + \frac{1-\gamma}{\gamma}\int_0^T\left(w-\frac{w'}{1-w}\right)dU_t}\nonumber\\
 & &\times e^{\frac{1-\gamma}{\gamma}\int_0^T \left(r+\frac{\tilde \mu ^2}{2 \tilde \sigma ^2}+\frac{\gamma}{2(1-\gamma)}\left(-\frac{\tilde \mu}{\tilde \sigma}+\tilde \sigma w\right)^2\right) d t}\nonumber\\
 &=& \frac{d \hat{\mathbb{P}}}{d \mathbb{P}} e^{\frac{1-\gamma}{\gamma}\int_0^T  \left(\frac{w'}{1-w}-w\right) \sigma d W_t - \frac{1-\gamma}{\gamma}\int_0^T\left(\frac{w'}{1-w}-w\right)dU_t}\nonumber\\
 & &\times e^{\frac{1-\gamma}{\gamma}\int_0^T  \left(r+\frac{\sigma^2}{2}\left(\frac{w'}{1-w}\right)^2-(1-\gamma)\sigma^2w\left(\frac{w'}{1-w}\right)+(1-\gamma)\frac{\sigma^2}{2}w^2\right) d t}.\nonumber
\end{eqnarray}
Substituting again the stochastic integral with equation~\eqref{itow} and~\eqref{w2ableitung} we obtain
\begin{displaymath}
\tilde M _T^{1-\frac{1}{\gamma}} = \frac{d \hat{\mathbb{P}}}{d \mathbb{P}}  e^{\frac{1-\gamma}{\gamma} \int_0^T \left(r+\frac{\sigma^2}{2}w'+(1-\gamma)\frac{\sigma^2}{2}w^2+\left(\mu-\frac{\sigma^2}{2}\right)w\right) d t} e^{\frac{1-\gamma}{\gamma}\left(\tilde q(Y_T)-\tilde q(Y_0)\right)}.
\end{displaymath}
Finally, the second bound results from taking the expectations on both sides, raising it to the power of $\gamma$, and using the ODE for $w$ from Lemma~\ref{solw}.
\end{proof}

With the finite-horizon bounds at hand, it is now straightforward to establish the long-run optimality of the policy $\tilde \pi (Y) = w(Y)$ in the shadow market. Note that this strategy does not take advantage of the available arbitrage opportunities: It already attains the portfolio constraint whenever the singular component of $\tilde{S}$ acts, hence the constraints prevent it from further increasing its risky position to profit from this.\footnote{A similar example of a market with arbitrage opportunities for which the optimal policy does not exploit these can be found in \cite[Example 5.1]{goll.kallsen.03}. Whereas the latter is somewhat pathological, this situation arises naturally in the present context when looking for an equivalent frictionless shadow market.}
\begin{mylemma}\label{LemmaC3}
Let $0 < \pi_{\max} \not =1$ and let $w$ be defined as in Lemma~\ref{solw}. Then, the policy $\tilde \pi (Y) = w(Y)$ is long-run optimal in the shadow market $(S^0,\tilde S)$ with equivalent safe rate $r+ \beta$, where $\beta$ is specified in Lemma~\ref{LemmaC2}. For $t \geq 0$, the corresponding wealth process and the numbers of safe and risky assets are given by
\begin{gather*}
\tilde X_t = (\xi^0S_0^0+ \xi\tilde S_0) \mathcal{E}\Bigl(\int_0^\cdot (r+ w(Y_s)\tilde \mu(Y_s)) d s+ \int_0^\cdot w\left(Y_s\right) \tilde \sigma \left(Y_s\right) d W_s +\int_0^\cdot \left(w\left(Y_s\right)-\tfrac{w'\left(Y_s\right)}{1-\left(w\left(Y_s\right)\right)}\right) d U_s\Bigr)_t,\\
\varphi_{0^-} = \xi, \quad \varphi_t = w \left(Y_t\right) \tilde X _t/\tilde S _t, \quad \varphi_{0^-}^0 = \xi^0, \quad \varphi_t^0 = \left(1-w \left(Y_t\right)\right) \tilde X _t/S _t^0.
\end{gather*}
\end{mylemma}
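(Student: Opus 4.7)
The plan is to combine the two identities of Lemma~\ref{LemmaC2}: the first delivers the long-run rate attained by the candidate policy $\tilde\pi(Y)=w(Y)$, while the second, used in conjunction with the discount factor of Lemma~\ref{discount}, provides a matching upper bound for every admissible competitor in the shadow market.

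First I would verify the wealth/position formulas. Substituting $\tilde\pi_t=w(Y_t)$ into the SDE for $\tilde X^\psi$ derived in the proof of Lemma~\ref{discount} reproduces the stated stochastic exponential for $\tilde X_t$, and the expressions for $\varphi_t,\varphi_t^0$ follow from the definition of the risky weight in the shadow market. A short computation using $\tilde S_t=S_tw(Y_t)/(le^{Y_t}(1-w(Y_t)))$ shows that the ask-price stock-cash ratio $\varphi_tS_t/(\varphi_t^0S_t^0)$ equals $le^{Y_t}$, which stays in $[l,u]$ thanks to the reflection of $Y$; equivalently, $\pi_t\in[\pi_-,\pi_{\max}]$, so the constraint \eqref{leverage constraint} is satisfied. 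Finite variation of $\varphi,\varphi^0$ follows because they change only through the local times $L,U$, and positivity of the liquidation value follows from $\tilde X_t>0$ together with $1-\epsilon w_+>0$ for small $\epsilon$ in the leverage regime; hence $(\varphi^0,\varphi)$ is admissible. Applying identity \eqref{bound1} and taking $\tfrac{1}{T(1-\gamma)}\log$ then gives
\[
\frac{1}{T}\log\mathbb{E}[\tilde X_T^{1-\gamma}]^{1/(1-\gamma)} = \frac{1}{T}\log\tilde X_0 + (r+\beta) + \frac{1}{T(1-\gamma)}\log\hat{\mathbb{E}}\bigl[e^{(1-\gamma)(\tilde q(Y_T)-\tilde q(Y_0))}\bigr].
\]
Since $Y$ lives in a compact interval on which the continuous function $\tilde q$ is bounded, the last term is $\mathcal{O}(1/T)$, so letting $T\to\infty$ identifies the equivalent safe rate of the candidate as $r+\beta$.

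For the upper bound, fix an admissible strategy $(\psi^0,\psi)$ with shadow payoff $\tilde X_T^\psi$. By Remark~\ref{rek:shadowconstraint} it satisfies the shadow constraint $\tilde\pi^\psi\le\tilde\pi_{\max}$, so Lemma~\ref{discount} yields $\mathbb{E}[\tilde X_T^\psi\tilde M_T]\le\tilde X_0^\psi$. Writing $\tilde X_T^\psi=(\tilde X_T^\psi\tilde M_T)\tilde M_T^{-1}$ and applying H\"older's inequality in its standard form for $\gamma\in(0,1)$, and in its reverse form for $\gamma>1$, both regimes produce
\[
\mathbb{E}[(\tilde X_T^\psi)^{1-\gamma}]^{1/(1-\gamma)} \le \mathbb{E}[\tilde X_T^\psi\tilde M_T]\cdot\mathbb{E}[\tilde M_T^{1-1/\gamma}]^{\gamma/(1-\gamma)} \le \tilde X_0^\psi\cdot\mathbb{E}[\tilde M_T^{1-1/\gamma}]^{\gamma/(1-\gamma)}.
\]
Identity \eqref{bound2}, combined with the same compactness argument applied to $\hat{\mathbb{E}}[e^{(1/\gamma-1)(\tilde q(Y_T)-\tilde q(Y_0))}]$, gives $\tfrac{1}{T(1-\gamma)}\log\mathbb{E}[\tilde M_T^{1-1/\gamma}]^\gamma=(r+\beta)+\mathcal{O}(1/T)$, so the $\liminf$ of the equivalent safe rate of $\psi$ is at most $r+\beta$, matching the lower bound of the previous paragraph.

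The main difficulty I anticipate is bookkeeping: keeping the sign of H\"older's inequality straight across $\gamma\in(0,1)$ and $\gamma>1$, since the reverse version flips the inequality on $\mathbb{E}[(\tilde X_T^\psi)^{1-\gamma}]$ but preserves it on the equivalent safe rate. A minor secondary point is the initial jump from $(\xi^0,\xi)$ to the nearest point of $[l,u]$ when the initial stock-cash ratio lies outside that interval: the two boundary cases of~\eqref{y0} are precisely what makes $\xi^0S_0^0+\xi\tilde S_0$ equal to the shadow wealth just after the jump (with $\tilde S_0$ coinciding with the buying resp.\ selling price at which the rebalance occurs), so the verification extends to $t=0$ as well.
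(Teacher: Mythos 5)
Your proposal is correct and follows essentially the same route as the paper: verify the formulas and the constraint via the reflected process $Y$, obtain the attained rate from the bound \eqref{bound1}, and bound every competitor by combining the discount factor of Lemma~\ref{discount} with \eqref{bound2}, using boundedness of $\tilde q$ on the compact range of $Y$. The only difference is cosmetic: where you re-derive the duality estimate by H\"older's inequality (standard and reversed form), the paper simply cites the power-utility duality bound of Guasoni and Robertson, which is proved by exactly that argument.
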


\begin{proof}
The formulas for the wealth process and the number of safe and risky units follow directly from the respective definitions. By definition of $\tilde{S}$ and the formulas for $\varphi^0,\varphi$ the corresponding risky weight in terms of the ask price is given by $\pi_t=\frac{le^{Y_t}}{1+le^{Y_t}}$. Since $Y$ is reflected to remain between $0$ and $\log(u/l)$, it takes values between $\frac{l}{1+l}=(1-\lambda)\pi_{\max}$ and $\frac{u}{1+u}=\pi_{\max}$, such that the constraint \eqref{leverage constraint} is satisfied.

 To verify the optimality of this policy, we use the standard duality bound for power utility (cf.\ \cite[Lemma 5]{guasoni.robertson.12}, which is applicable by Lemma \ref{discount}), valid for the shadow payoff $\tilde X _T^\psi$ of any admissible strategy $(\psi^0,\psi)$ in the shadow market:
\begin{displaymath}
\mathbb{E}[(\tilde X_T^{\psi})^{1-\gamma}]^{\frac{1}{1-\gamma}} \leq \mathbb{E}[(\tilde M_T)^{\frac{\gamma-1}{\gamma}}]^{\frac{\gamma}{1-\gamma}}.
\end{displaymath}
This inequality and the second bound~\eqref{bound2} in Lemma~\ref{LemmaC2} yield an upper bound for the equivalent safe rate:
\begin{displaymath}
\liminf_{T\rightarrow \infty} \frac{1}{(1-\gamma)T}\log{\mathbb{E}[(\tilde X_T^{\psi})^{1-\gamma}]} \leq \liminf_{T\rightarrow \infty} \frac{\gamma}{(1-\gamma)T}\log{\mathbb{E}[(\tilde M_T ^{\frac{\gamma-1}{\gamma}})]} = r+ \beta.\nonumber
\end{displaymath}
Here, we used in the last step that $\tilde q$ is bounded on the compact support of its argument $Y$. Similarly, the first bound~\eqref{bound1} in Lemma~\ref{LemmaC2} implies that the upper bound $r+ \beta$ is attained by the policy $\tilde \pi$, which corresponds to the strategy $(\varphi^0,\varphi)$.
\end{proof}

To conclude that $\tilde S$ is indeed a shadow price, it remains to check that the optimal strategy $(\varphi^0,\varphi)$ only acts on the sets $\{Y_t = 0\}$ and $\{Y_t = \log{\left(u/l\right)}\}$. 
\begin{mylemma}\label{LemmaC4}
Let $0 < \pi_{\max} \not= 1$. Then, the number of shares $\varphi = w(Y)\tilde X/\tilde S$ for the optimal policy from Lemma~\ref{LemmaC3} has the dynamics
\begin{displaymath}
\frac{d \varphi_t}{\varphi_t} = (1-\pi_{\max}(1-\lambda))d L_t - \frac{1-\pi_{\max}}{1-\pi_{\max}+\pi_{\max}(1-\epsilon)}dU_t.
\end{displaymath}
Thus, $\varphi_t$ increases only when $\{Y_t = 0\}$, that is, when $\tilde S_t$ equals the ask price, and decreases only when $\{Y_t = \log{\left(u/l\right)}\}$, that is, when $\tilde S_t$ equals the bid price. In particular, it is of finite variation.
\end{mylemma}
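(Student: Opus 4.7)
The plan is to apply Itô's formula to the decomposition
\[
\log \varphi_t = \log w(Y_t) + \log \tilde X_t - \log \tilde S_t,
\]
using the dynamics of $Y$ from~\eqref{dynamicsy}, the dynamics of $\tilde X$ given in Lemma~\ref{LemmaC3}, and the dynamics of $\tilde S$ from Lemma~\ref{shatl}. Since $L$ and $U$ are continuous and of finite variation, the local time contributions to $d\log\varphi_t$ coincide with those of $d\varphi_t/\varphi_t$, so it suffices to collect the $dt$, $dW_t$, $dL_t$ and $dU_t$ coefficients on the right hand side.

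First I would verify that the continuous parts vanish, which is the heuristic statement that $\varphi$ does not trade in the interior of the no-trade region. The $dW_t$ coefficient is $\sigma w'/w + w\tilde\sigma - \tilde\sigma = \sigma w'/w + (w-1)\tilde\sigma$, and substituting $\tilde\sigma = \sigma w'/[w(1-w)]$ shows this is zero. The $dt$ coefficient equals
\[
\tfrac{w'}{w}(\mu-\tfrac{\sigma^2}{2}) + \tfrac{\sigma^2}{2}\bigl(\tfrac{w''}{w}-\tfrac{w'^2}{w^2}\bigr) + (w-1)\tilde\mu + \tfrac{1-w^2}{2}\tilde\sigma^2,
\]
and after inserting the definitions of $\tilde\mu,\tilde\sigma$ and eliminating $w''$ via the identity~\eqref{w2ableitung}, a direct algebraic simplification (with common denominator $2w^2(1-w)$) reduces the bracket to $-(1-w)-2w+(1+w)=0$. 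This cancellation is the main algebraic obstacle, though it is routine once the substitutions are laid out.

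For the local time terms, observe that $\tilde X$ and $\tilde S$ carry no $dL_t$ part, so at $\{Y_t=0\}$ only $d\log w(Y_t)$ contributes, yielding coefficient $w'(0)/w(0)$; Corollary~\ref{wprimecor} together with $w(0)=(1-\lambda)\pi_{\max}$ identifies this with $1-(1-\lambda)\pi_{\max}$, giving the stated $dL_t$ coefficient. At $\{Y_t=\log(u/l)\}$ all three terms contribute (with a sign flip from $dY_t$ picking up $-dU_t$), and the $w'$ terms telescope via $-\frac{1}{w}-\frac{1}{1-w}+\frac{1}{w(1-w)}=0$, leaving the coefficient $w(\log(u/l))-1 = w_+ - 1$; substituting the boundary value~\eqref{B2} yields exactly $-(1-\pi_{\max})/[(1-\pi_{\max})+\pi_{\max}(1-\epsilon)]$.

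Finally, since the drift and Brownian components of $d\varphi_t/\varphi_t$ vanish and only continuous, monotone local time terms remain, $\varphi$ is of finite variation, with $\varphi$ increasing only on $\{Y_t=0\}$ (where $\tilde S_t = S_t$, the ask price, by Lemma~\ref{shatl}) and decreasing only on $\{Y_t=\log(u/l)\}$ (where $\tilde S_t=(1-\epsilon)S_t$, the bid price), as required.
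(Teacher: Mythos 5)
Your proof is correct and follows essentially the paper's route: a direct It\^o computation for $\varphi = w(Y)\tilde X/\tilde S$ using the dynamics of $Y$, $\tilde X$, $\tilde S$, the identity \eqref{w2ableitung}, and the boundary values $w(0)=(1-\lambda)\pi_{\max}$, $w'(0)=\pi_-(1-\pi_-)$, $w(\log(u/l))=w_+$; working with $\log\varphi$ instead of integrating the product by parts is only a cosmetic reorganization, and your cancellations of the $dt$, $dW$ and local-time coefficients check out.
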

\begin{proof}
It\^{o}'s formula, the boundary conditions for $w$, and Equation~\eqref{w2ableitung} imply
\begin{displaymath}
d w(Y_t) = -(1-\gamma)\sigma^2w'(Y_t)w(Y_t)d t + \sigma w'(Y_t)d W_t + w'(Y_t) (d L_t-d U_t).
\end{displaymath}
Integrating $\varphi = w(Y) \tilde X/\tilde S$ by parts twice, using the dynamics of $w(Y),\tilde X$ and $\tilde S$, and simplifying yields
\begin{displaymath}
d \varphi_t/\varphi_t = w'(Y_t)/w(Y_t) d L_t - (1-w(Y_t))d U_t.
\end{displaymath}
Since $L_t$ and $U_t$ only increase (resp. decrease, if $\pi_{\max} >1$) on $\{Y_t = 0\}$ and $\{Y_t = \log{\left(u/l\right)}\}$, respectively, the assertion then follows from the boundary conditions for $w$ and identity~\eqref{wat0}.
\end{proof}

Since the shadow price takes values in the bid-ask spread, it allows the investor to trade at more favorable prices than in the original market with transaction costs. As the optimal strategy $(\varphi^0,\varphi)$ only entails buying (resp.\ selling) the risky asset when $\tilde S$ coincides with the ask (resp.\ bid) price, it is also feasible with transaction costs, and therefore optimal with the same growth rate. 

\begin{mylemma}\label{LemmaC5}
For sufficiently small $\epsilon$, the policy $(\varphi^0,\varphi)$ from Lemma \ref{LemmaC3}  is also long-run optimal in the original market with transaction costs, with the same equivalent safe rate $r+\beta$.
\end{mylemma}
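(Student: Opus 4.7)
The strategy is to exploit the sandwich structure given by the shadow price $\tilde S \in [(1-\epsilon)S,S]$: every admissible strategy in the original market is admissible in the frictionless shadow market with a weakly larger liquidation value, while the candidate policy $(\varphi^0,\varphi)$ produces identical wealth in both markets because it only trades at the bid or ask, by Lemma \ref{LemmaC4}.

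First I would verify that $(\varphi^0,\varphi)$ is admissible in the original market with transaction costs. Finite variation is immediate from Lemma \ref{LemmaC4}. The self-financing condition \eqref{eq:sf} holds in the original market because $\tilde S_t = S_t$ on $\{dL_t > 0\}$ (where $\varphi$ increases) and $\tilde S_t = (1-\epsilon)S_t$ on $\{dU_t > 0\}$ (where $\varphi$ decreases), by the boundary identities $w(0)/[l(1-w(0))] = 1$ and $w(\log(u/l))/[u(1-w(\log(u/l)))] = 1-\epsilon$ used in Lemma \ref{shatl}. Consequently, the frictional liquidation value $\Xi_t^{\varphi}$ coincides with the shadow wealth $\tilde X_t$, which is positive by construction. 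The portfolio constraint in the original market was already checked inside the proof of Lemma \ref{LemmaC3} via $\pi_t = le^{Y_t}/(1+le^{Y_t}) \le \pi_{\max}$.

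Next I would establish the upper bound for an arbitrary admissible frictional strategy $(\psi^0,\psi)$. Since $\tilde S \in [(1-\epsilon)S,S]$, a purchase costs at most $S_t d\psi_t^{\uparrow}$ and a sale yields at least $(1-\epsilon)S_t d\psi_t^{\downarrow}$ relative to trading at $\tilde S$; therefore the shadow wealth $\tilde X_T^{\psi}$ (using the same share process $\psi$ with cash adjusted self-financingly at $\tilde S$) dominates the frictional liquidation value $\Xi_T^{\psi}$. Moreover the original constraint $\pi_t \le \pi_{\max}$ implies the shadow constraint $\tilde \pi_t \le \tilde\pi_{\max}$ of Remark \ref{rek:shadowconstraint}, so $(\psi^0,\psi)$ is admissible in the shadow market. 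Applying Lemma \ref{LemmaC3} then gives
\begin{displaymath}
\liminf_{T\to\infty}\frac{1}{T}\log \mathbb{E}\bigl[(\Xi_T^{\psi})^{1-\gamma}\bigr]^{\frac{1}{1-\gamma}} \;\le\; \liminf_{T\to\infty}\frac{1}{T}\log \mathbb{E}\bigl[(\tilde X_T^{\psi})^{1-\gamma}\bigr]^{\frac{1}{1-\gamma}} \;\le\; r+\beta.
\end{displaymath}
Combined with $\Xi_T^{\varphi} = \tilde X_T$ and the fact that $\tilde X$ attains $r+\beta$ as its equivalent safe rate by Lemma \ref{LemmaC3}, this shows that $(\varphi^0,\varphi)$ is long-run optimal in the original market with equivalent safe rate $r+\beta$.

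The main subtlety I expect is the direction of the inequality for power utility when $\gamma > 1$: there $x \mapsto x^{1-\gamma}$ is decreasing, so one must be careful that the shadow wealth being larger translates correctly under $U(x) = x^{1-\gamma}/(1-\gamma)$. Because the factor $1/(1-\gamma)$ is negative in that regime, the inequality $\mathbb{E}[(\Xi_T^{\psi})^{1-\gamma}] \ge \mathbb{E}[(\tilde X_T^{\psi})^{1-\gamma}]$ still yields the correct upper bound after taking $\frac{1}{1-\gamma}\log$, since $\log$ is monotone and $\frac{1}{1-\gamma}$ flips the inequality once more. I would write this monotonicity step carefully in both cases $\gamma\in(0,1)$ and $\gamma>1$, and also note that no additional admissibility issue arises because positivity of shadow wealth is preserved by the sandwich inequality in the same way.
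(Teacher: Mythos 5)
Your plan follows essentially the same route as the paper, whose proof simply invokes the analogous result of Gerhold et al.\ (their Proposition C.5): the sandwich $\tilde S\in[(1-\epsilon)S,S]$ yields the upper bound for every admissible frictional strategy via Remark \ref{rek:shadowconstraint}, Lemma \ref{discount} and Lemma \ref{LemmaC3}, and the candidate policy attains it because it trades only when $\tilde S$ equals the bid or ask (Lemma \ref{LemmaC4}). One step, however, is stated incorrectly: the frictional liquidation value $\Xi_t^\varphi$ does \emph{not} coincide with the shadow wealth $\tilde X_t$. The two markets do produce the same positions $(\varphi^0,\varphi)$, since every trade is executed at a price where $\tilde S$ equals the relevant bid or ask, but the liquidation value marks the risky position at the bid $(1-\epsilon)S_t$ while $\tilde X_t$ marks it at $\tilde S_t\ge(1-\epsilon)S_t$; hence only $\Xi_t^\varphi\le\tilde X_t$ holds, together with the reverse bound $\Xi_t^\varphi\ge\bigl(1-\tfrac{\epsilon}{1-\epsilon}\pi_{\max}\bigr)\tilde X_t$, which follows from the constraint $\pi_t\le\pi_{\max}$ and is exactly the inequality the paper uses in the proof of Theorem \ref{finitehorizone}. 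This repair is what your argument actually needs: the multiplicative factor is where the hypothesis that $\epsilon$ be sufficiently small enters (it guarantees positivity of $\Xi^\varphi$, which in the leverage case $\pi_{\max}>1$ is not automatic), and it is harmless for the equivalent safe rate because $\tfrac1T\log$ of a constant factor vanishes as $T\to\infty$. With this correction, your upper-bound argument (including the $\gamma>1$ monotonicity bookkeeping) and the attainment step go through as in the paper.
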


\begin{proof}
This follows along the lines of \cite[Proposition C.5]{gerhold.al.11}.
\end{proof}

Let us summarize our results so far.

\begin{mythm}\label{TheoremC6}
For a small spread $\epsilon > 0$, and a binding portfolio constraint $0<\pi_{\max} \not= 1$, the process $\tilde S$ in Lemma~\ref{shatl} is a shadow price in the sense of Definition~\ref{shadowpricedef}. A long-run optimal policy with equivalent safe rate $r+\beta$, both for the shadow market and for the original market with transaction costs, is to keep the risky weight $\pi$ (in terms of the ask price $S$) within the no-trade region
\begin{displaymath}
\left[\pi_-,\pi_+\right] = [(1-\lambda)\pi_{\max},\pi_{\max}].
\end{displaymath}
\end{mythm}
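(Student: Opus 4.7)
The plan is to verify that the process $\tilde S$ satisfies all the conditions of Definition~\ref{shadowpricedef}, and then to deduce the stated long-run optimality by piecing together the lemmas already established. Essentially all the technical work has been carried out in Lemmas~\ref{shatl}--\ref{LemmaC5}; what remains is to orchestrate their conclusions into the statement of the theorem.

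First, I would check the three defining properties of a shadow price. The bid-ask containment $\tilde S_t \in [(1-\epsilon)S_t, S_t]$ is the last statement of Lemma~\ref{shatl}. The existence of a candidate long-run optimal strategy $(\varphi^0,\varphi)$ for $\tilde S$ is provided by Lemma~\ref{LemmaC3}, which also shows that the induced risky weight in terms of the ask price $S$ equals $le^{Y_t}/(1+le^{Y_t})$; since $Y_t$ is reflected within $[0,\log(u/l)]$, this weight stays within $[(1-\lambda)\pi_{\max},\pi_{\max}]$, so the portfolio constraint \eqref{leverage constraint} holds. Finally, Lemma~\ref{LemmaC4} shows that $\varphi_t$ increases only on $\{Y_t=0\}$ and decreases only on $\{Y_t=\log(u/l)\}$, i.e., precisely when $\tilde S_t$ coincides with the ask price $S_t$ and the bid price $(1-\epsilon)S_t$, respectively. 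This verifies all the requirements of Definition~\ref{shadowpricedef}, and simultaneously pins down the no-trade region as $[(1-\lambda)\pi_{\max},\pi_{\max}]$.

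Next, for the long-run optimality: Lemma~\ref{LemmaC3} already establishes that the policy $\tilde\pi(Y)=w(Y)$ attains the equivalent safe rate $r+\beta$ in the frictionless shadow market, by combining the finite-horizon identities of Lemma~\ref{LemmaC2} with the standard power-utility duality inequality (which is applicable thanks to the supermartingale property of Lemma~\ref{discount}). To transfer the result to the original market with transaction costs, I would appeal to Lemma~\ref{LemmaC5}: any strategy admissible under transaction costs is \emph{a fortiori} admissible in the shadow market (since $\tilde S$ lies within $[(1-\epsilon)S,S]$, liquidation values under $\tilde S$ dominate those under the transaction-cost prices), so the shadow upper bound $r+\beta$ is also an upper bound under transaction costs. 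Conversely, because the candidate strategy $(\varphi^0,\varphi)$ only buys when $\tilde S_t=S_t$ and sells when $\tilde S_t=(1-\epsilon)S_t$, its cash flows coincide in both markets, so it attains $r+\beta$ with transaction costs as well.

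The one conceptual point I would flag as the main obstacle, already isolated in the preceding lemmas, is the singular reflection term appearing in the dynamics of $\tilde S$: the shadow market is not arbitrage-free, so the frictionless duality theory does not apply off the shelf. This is precisely the role of Lemma~\ref{discount}, whose discount factor $\tilde M$ bakes the binding constraint \eqref{eq:shadowconstraint} into its exponent to neutralize the singular component. Once this is in hand, the upper bound $r+\beta$ propagates via \eqref{bound2}, the matching lower bound via \eqref{bound1} is attained by the candidate policy, and the chain of implications sketched above closes the argument.
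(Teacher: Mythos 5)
Your proposal is correct and follows essentially the same route as the paper: the theorem is simply the assembly of Lemmas~\ref{shatl}, \ref{discount}, \ref{LemmaC2}, \ref{LemmaC3}, \ref{LemmaC4}, and \ref{LemmaC5}, which is exactly how you organize the argument (bid-ask containment, duality with the constraint-adjusted discount factor to handle the singular, non-arbitrage-free dynamics, attainment by $\tilde\pi(Y)=w(Y)$, trading only at the bid and ask, and transfer to the transaction-cost market). No gaps to flag.
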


\subsection{Trading Volume}\label{shwe}
The formulas for share and wealth turnover follow from \cite[Lemma D.2]{gerhold.al.11} along the lines of \cite[Corollary D.3]{gerhold.al.11}. Plugging the asymptotics for the gap $\lambda$ into these explicit formulas yields the corresponding expansions for share and wealth turnover.

\subsection{Approximate Optimality for Finite Horizons}
Finally we show that, like in the absence of constraints, the stationary long-run optimal policy is also approximately optimal for any finite horizon $T>0$.

\begin{proof}[Proof of Theorem~\ref{finitehorizone}]
For a fixed time horizon $T$ let $(\psi^0,\psi)$ be any admissible strategy. The liquidation value $\Xi_T^\psi$ of this strategy is smaller than the corresponding shadow payoff $\tilde X _T^\psi$, i.e., $\Xi_T^\psi \leq \tilde X_T^\psi = \psi_0^0 + \psi_0\tilde S_0+ \int_0^T\psi_s d \tilde S _s$. Thus, the standard duality bound for power utility as in \cite[Lemma 5]{guasoni.robertson.12} and the second bound~\eqref{bound2} in Lemma~\ref{LemmaC2} imply that
\begin{equation}\label{C13}
\frac{1}{(1-\gamma)T}\log{\mathbb{E}[(\Xi_T^\psi)^{1-\gamma}]}
\leq r+ \beta + \frac{1}{T}\log{(\psi_{0^-}^0+\psi_{0^-}S_0)} +\frac{\gamma}{(1-\gamma)T}\log{\hat{\mathbb{E}}[e^{{(\frac{1}{\gamma}-1)}(\tilde q (Y_T)- \tilde q (Y_0))}]}.
\end{equation} 
For the long-run optimal strategy $(\varphi^0,\varphi)$ we have $\Xi_T^\varphi \geq (1-\pi_{\max}\epsilon/(1-\epsilon))\tilde X_T$. Hence, the first bound~\eqref{bound1} in Lemma~\ref{LemmaC2} yields
\begin{eqnarray}\label{C14}
\frac{1}{(1-\gamma)T}\log{\mathbb{E}[(\Xi_T^\varphi)^{1-\gamma}]} &\geq& 
r+ \beta + \frac{1}{T}\log{(\varphi_{0^-}^0+\varphi_{0^-}\tilde S_0)}\\
 & &  + \frac{1}{(1-\gamma)T}\log{\hat{\mathbb{E}}[e^{\left(1-\gamma\right)\left(\tilde q (Y_T)- \tilde q (Y_0)\right)}]} + \frac{1}{T}\log{\left(1-\frac{\epsilon}{1-\epsilon}\pi_{\max}\right)}.\nonumber
\end{eqnarray}
For $\epsilon \downarrow 0$, we have $\log{(1-\pi_{\max} \epsilon/(1-\epsilon))} = \mathcal{O}(\epsilon)$ and $\log{(\varphi_{0^-}^0+\varphi_{0^-}\tilde S_0)} \geq  \log{(\varphi_{0^-}^0+\varphi_{0^-} S_0)} + \mathcal{O}(\epsilon)$ because $\tilde S \in [(1-\epsilon)S,S]$. Moreover, since $\lambda = \mathcal{O}(\epsilon^{1/2})$ and $\log{(u/l)} = \mathcal{O}(\epsilon^{1/2})$, the identities ~\eqref{wat0},~\eqref{w2ableitung} and Taylor expansion yield
\begin{equation*}
\frac{w'(z)}{1-w(z)}-w(z) =
\frac{\kappa^2\mu^2}{\gamma \sigma^2}\left(1-\frac{1}{\kappa}\right)\epsilon^{1/2} + \mathcal{O}\left(\epsilon\right).
\end{equation*}
Therefore, $\tilde q(y) = \int_0^y (\frac{w'(z)}{1-w(z)}-w(z)) dz$ is also of order $\mathcal{O}\left(\epsilon\right)$ for $y \leq |\log(u/l)|$, which completes the proof.
\end{proof}
\bibliographystyle{abbrv}
\bibliography{trans_constr}

\end{document}